\documentclass[prl,twocolumn,floatix,amssymb,showpacs,amsmath,superscriptaddress]{revtex4-1}
\usepackage{graphicx}
\usepackage{epsfig,color}
\usepackage{amsmath,amssymb,amsthm}
\usepackage[utf8]{inputenc}
\usepackage[colorlinks, urlcolor=blue, linkcolor=red]{hyperref}
\newtheorem{theo}{Theorem}
\newtheorem{lemma}{Lemma}

\newtheorem{definition}{Definition}

\DeclareMathOperator{\arcsinh}{arcsinh}
\usepackage{soul}

\begin{document}

\title{Two-way covert quantum communication in the microwave regime}
\author{R. Di Candia}
\email{rob.dicandia@gmail.com}
\affiliation{Department of Communications and Networking, Aalto University, Espoo, 02150 Finland}
\author{H. Yi\u{g}itler}
\affiliation{Department of Communications and Networking, Aalto University, Espoo, 02150 Finland}
\author{G. S. Paraoanu}
\affiliation{QTF Centre of Excellence, Department of Applied Physics,
Aalto University School of Science, FI-00076 AALTO, Finland}
\author{R. J\"antti}
\affiliation{Department of Communications and Networking, Aalto University, Espoo, 02150 Finland}

\begin{abstract}

Quantum communication addresses the problem of exchanging information across macroscopic distances by employing encryption techniques based on quantum mechanical laws. 
Here, we advance a new paradigm for secure quantum communication by combining backscattering concepts with covert communication in the microwave regime. 
Our protocol allows communication between Alice, who uses {\it only} discrete phase modulations, and Bob, who has access to cryogenic microwave technology. Using notions of quantum channel discrimination and quantum metrology, we find the ultimate bounds for the receiver performance, proving that quantum correlations can enhance the signal-to-noise ratio by up to $6$~dB. These bounds rule out any quantum illumination advantage when the source is strongly amplified, and shows that a relevant gain is possible only in the low photon-number regime. We show how the protocol can be used for covert communication, where the carrier signal is indistinguishable from the thermal noise in the environment. We complement our information-theoretic results with a feasible experimental proposal in a circuit QED platform. This work makes a decisive step toward implementing secure quantum communication concepts in the previously uncharted $1-10$~GHz frequency range, in the scenario when the disposable power of one party is severely constrained. 

\end{abstract}

\maketitle

\section{I. Introduction}
It is well understood that the application of quantum mechanics to traditional technology-related problems may give a new twist to a number of fields. Quantum communication is a potential candidate for over-passing its classical counterpart in relevant aspects of information-theoretic security. By appropriately encoding the information in the degrees of freedom of quantum systems, a possible eavesdropping attack can be detected due to the sensitivity of the system to the measurement process. This simple reasoning has been at the basis of defining a number of quantum key distribution (QKD) protocols during the first quantum information era, such as BB84~\cite{BB84}, E91~\cite{E91} and B92~\cite{B92}. The defined protocols have been proven to be unconditionally secure provided that the transmitting channel has a low noise~\cite{Shor84,Pirandola19}. The same level of security would be impossible to reach even in the most sophisticated known classical architectures, which rely on the current impossibility of solving efficiently specific problems, such as prime number factorization or finding the solution of systems of multivariate equations~\cite{post}. 
This means that classical encryption techniques are not fundamentally secure:
information considered to be safely stored {\it today} is not guaranteed to be so {\it tomorrow}~\cite{Mosca}. Quantum communication aims to solve this long-term security problem at some infrastructure costs yet to be quantified.

From a theoretical point of view, there is a challenge in defining quantum communication protocols which are secure, efficient and practical at the same time. In this respect, optical systems have been considered for decades the main candidates for quantum communication, as thermal effects are negligible in this frequency range. For instance, QKD security proofs require level of noises which at room temperature are reachable only by frequencies at least in the Terahertz band~\cite{thermalQKD}. In addition, entanglement can be distributed with minimal losses, allowing for the implementation of a series of key long-distance quantum communication experiments, such as quantum teleportation~\cite{Qtel}, device-independent QKD~\cite{DevQKD}, deterministic QKD~\cite{Lo}, superdense coding~\cite{dense}, and continuous-variable quantum information correction schemes~\cite{Sabuncu, Mista, Lassen}, among others. The realization of these experiments have been mainly possible because of large efforts in improving photon-detection fidelities, single-photon generation, high-rate entanglement generation, and on-chip fabrication methods~\cite{onchip}. Despite these advances in optical technology for quantum communication, low-frequency systems, such as those operating in the microwave or radio wavelengths, have still advantages related to easier electronic design. In addition, microwave signals in the range $100$~MHz -- $10$~GHz belong to the low-opacity window, therefore are particularly suitable for open-air communication applications.  In fact, one may think as optical fibers as a better choice for long-distance communication, due to their resilience to environmental dissipation, and microwave systems for short- and medium- distance applications for the low-cost of their electronics~\cite{Karsa2020}. It is therefore compelling to investigate at the fundamental level whether secure open-air communication protocols are possible at larger wavelengths, with a long-term idea of reaching a network design integrating quantum and classical links, with minimal possible changes in the already existing infrastructure. 
With the advent of circuit QED (cQED) as a promising platform for quantum computation~\cite{Blais2020}, experimental and theoretical research has been focused on understanding the properties of microwave signals at the quantum level. If cooled down at $20$~mK, thermal effects are suppressed and microwave electromagnetic fields with frequency above a few GHz show exemplary quantum effects, such as superposition, entanglement and squeezing below vacuum~\cite{Menzel12}. Lately, we have witnessed several experimental advances, which can be regarded as milestones for developing microwave quantum communication, such as improved Josephson parametric amplifiers (JPAs)~\cite{Zhong13, Pogorzalek17}, microwave photodetectors~\cite{Bessel18} and bolometers~\cite{Kokkoniemi20}, generation of path-entanglement~\cite{Menzel12, Fedorov18, DiCandia14}, generation of multi-mode entangled states~\cite{Lahteenmaki2016, Wilson2018, Bruschi2017}, remote state preparation~\cite{Pogorzalek19} and quantum teleportation~\cite{Fedorov2021, DiCandia15, Fedorov16}. The ability of generating and manipulating microwave radiation in a way that reproduces quantum optics experiments allows to extend the concept of photon as a single-quanta excitation of the electromagnetic field. The short-term promise in the field is to demonstrate quantum communication and quantum sensing protocols protocols in the microwave regime~\cite{Barzanjeh15, LasHeras17, Jonsson,Shabir,Wilson}, which would then enable real-life applications~\cite{Sanz18}. Recent theoretical results in noisy quantum sensing and metrology show that preserving entanglement in an experiment is not a fundamental feature for reaching a quantum advantage~\cite{Tan08, Braun18}, paving the way for the implementation of open-air quantum microwave protocols.

\begin{figure}[t!]
	\centering
\includegraphics[width=0.45\textwidth]{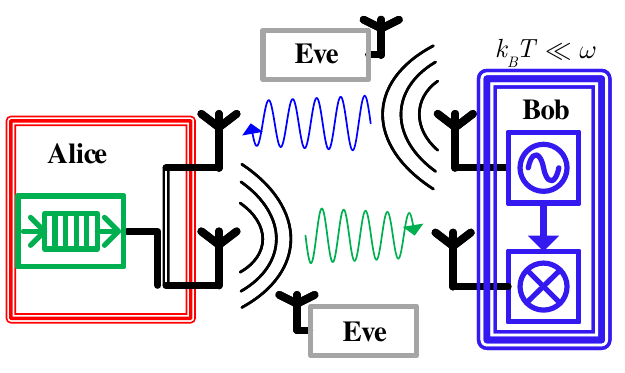}
\caption{{\bf Sketch of the two-way quantum communication protocol.} Bob sends a signal to Alice, who embeds the message by phase modulation. The signal is then sent back to Bob, who retrieve the information via a suitable measurement. Bob may use quantum correlations in order to increase the signal-to-noise ratio. A passive Eve is able to collect the photons lost in both paths, but she does not have access to Bob and Alice labs.}
\label{draw1}
\end{figure} 

This article exploits recent results in quantum communication, quantum sensing and cQED in order to introduce a feasible secure {\it two-way} quantum communication protocol in the microwave regime. The protocol combines microwave quantum radar technology with covert communication. It consists in the secure exchange of information between a classical party (Alice) and a quantum party (Bob), who pre-share a secret. Bob sends a continuous-variable microwave signal to Alice, which encodes her message in the phase modulation according to a pre-agreed alphabet. The signal is then transmitted back to Bob and measured in order to discriminate between the different modulations (see Fig.~\ref{draw1}). As Alice is performing uniquely passive operations at room temperature, she needs only classically available components.
If seen from the energy-expenditure perspective, one can think of {\it one-} and {\it two-way} protocols as having fundamentally different features in quantum communication. In one-way protocols, Alice (the message transmitter) generates quantum states of some sort, while Bob (the message receiver) has access to some operations typically easy to implement, and a measurement device. Taking into account that even the simplest of the detection schemes, such as homodyne or heterodyne, requires amplifiers and signal generators, a non-negligible energy disposal for both Alice and Bob is required in order to implement any one-way protocol. Our two-way protocol, instead, puts all the challenging  technological requirements at Bob's side. This feature has recently gained a lot of interest in the communication engineer community, especially for implementing backscatter communication~\cite{Griffin, Navaz, Niu}. In fact our protocol is significantly different with respect to previous proposals~\cite{Pirandola08,Pirandola2way, Ghorai19, Pirandola1way, Shapiro09}, since it does not require active control at Alice side. Since Alice's energy requirements are minimized, we envision real-life applications in ultra-low power RF communications, Internet-of-Things, and Near Field Communication (NFC) based technology, among others. While the possibility of using this technology has been speculated in the literature, no rigorous approach has ever been pursued. One of the goal of this paper is indeed to provide a solid theoretical basis to the quantum backscatter communication field, together with a feasible implementation proposal. We discuss both the cases when the signal is in a coherent state and when it is correlated with an idler. The latter shows a gain of up to $6$~dB in the signal-to-noise ratio (SNR) with respect to the former one, at some experimental cost in the preparation and the detection stages by Bob. The setup resembles the Gaussian quantum illumination protocol~\cite{Tan08, Zhang13,Zhang15}, where a weak two-mode squeezed vacuum (TMSV) state is transmitted in a bright environment in order to detect the presence or absence of a low-reflectivity object in a region of space. Unlike radar applications, for which the quantum illumination paradigm is usually employed~\cite{Barzanjeh15, Wilson}, and where location, velocity and cross-section are unknown, our communicating setup can be thought to be applied with static antennas where all these properties are known and can be engineered. In the first part of the paper, we derive a general expression for the error probability, putting an emphasis on Gaussian states and Schr\"odinger's cat (SC) states~\cite{Devoret13}. We prove that $6$~dB is indeed the maximal gain in the error probability exponent reachable by a quantum-correlated state over a coherent state receiver. This also settles the ultimate limits of quantum illumination~\cite{Shapiro19}. We show that our communication protocol is covert~\cite{Bash15,Arrazola16, Liu17, Bullock2020}. In a covert quantum communication protocol the signal is hidden in the thermal noise unavoidably present in a room-temperature environment, so that Eve's detection probability collapses. The basic idea is therefore to protect the message content by hiding its existence. This concept has a natural application in low-frequency spectrum communication. Covertness is achievable only if the generated signal is weak enough, so that cryogenic detection technology is needed at Bob's side. This concept can be applied in situations when one does not want to expose the metadata about the propagation channel access time and duration. In addition, since covertness works in the regime where the power of all involved signals is low, one is allowed to use already licensed frequencies without compromising the performance of the licensed users~\cite{Griffin}. In our protocol, we use a one-time-pad cipher to allow for the covert channel to be measured only by the intended recipients (detectability), and to make the transmitted and reflected signals uncorrelated between each other and indistinguishable from thermal noise (indistinguishability). In this way the protocol is also unconditionally secure. We show explicit strategies for how to use entanglement to increase the effective bandwidth, i.e. the number of data hiding bits per channel use.  Indeed, we show the square-root-law for our two-way communication protocol, which states that the number of bits that can be sent over $n$ channel usages scales as $O(\sqrt{n})$, finding explicitly all the multiplicative constants for various transmitters and receivers. In addition, we design an entanglement-assisted protocol based on SC states in a circuit QED (cQED) setup, which relies solely on Jaynes-Cumming interactions and qubit measurements. This is an important requirement for a circuit QED implementation, since a receiver based on photon-detection is currently too demanding to be realistic. This shows the path for future experimental investigation of our concept.

The article is organized in the following way. In Section~\hyperref[sec:II]{II}, we describe our communicating setup. In Section~\hyperref[sec:III]{III}, we provide the ultimate bounds on the receiver performance for both local and collective strategies. In Section~\hyperref[sec:III]{IV}, we focus on the cases of Gaussian and SC states, providing an explicit expression for the Chernoff bound in the corresponding channel discrimination problem. In Section~\hyperref[sec:IV]{V}, we discuss the conditions on the average transmitting power in order to have a covert system, together with a proof of the square-root law for our two-way setup for the discussed transmitters and receivers. We also discuss the key-expansion and synchronization protocols. In Section~\hyperref[sec:V]{VI}, we discuss a cQED protocol based on SC states, with a receiver design based on Jaynes-Cumming interactions and qubit measurements.

\section{II. The Setup}\label{sec:II}

\begin{figure}[t!]
	\centering
\includegraphics[width=0.45\textwidth]{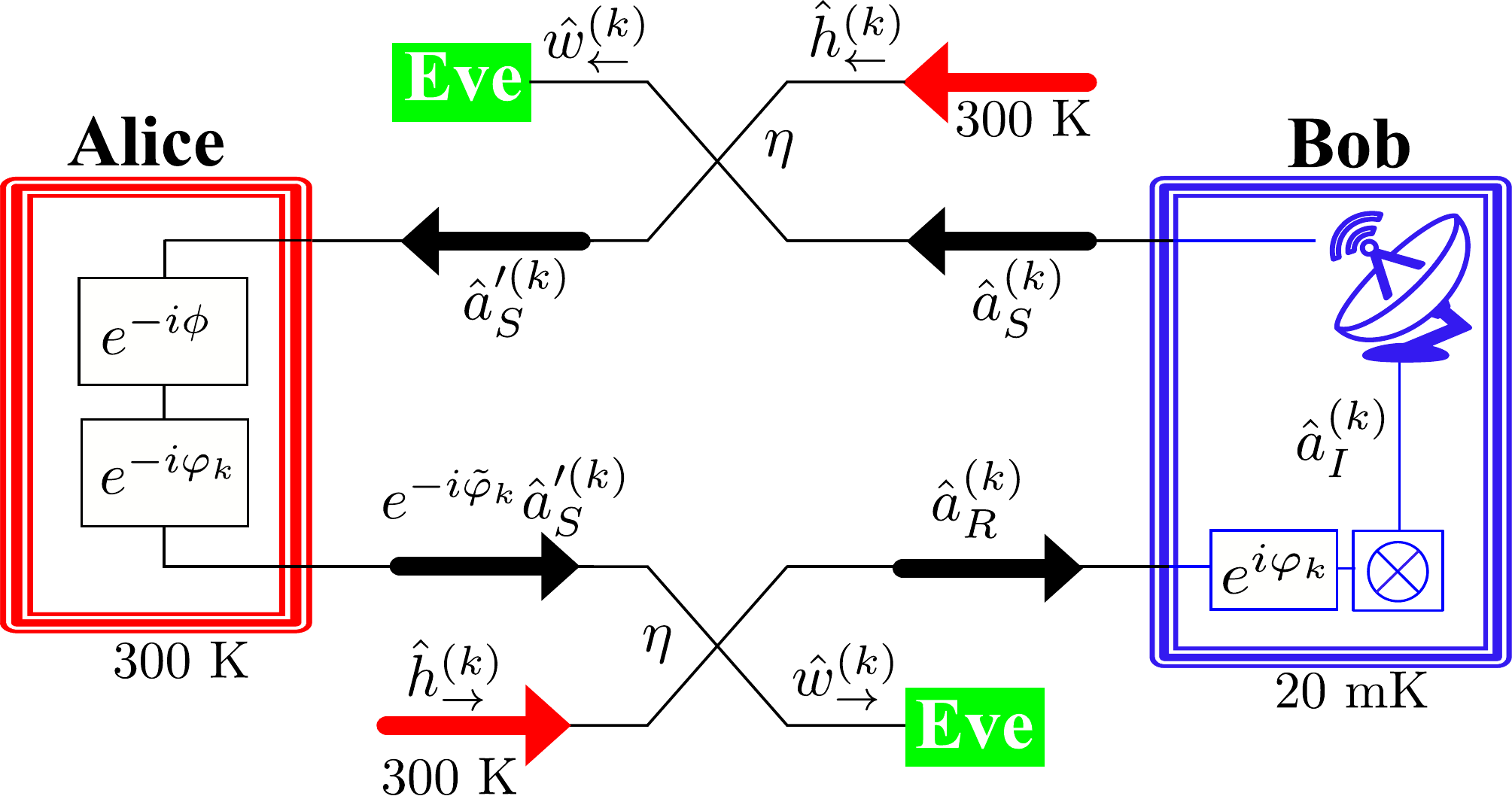}
\caption{{\bf Setup for the two-way covert quantum communication through a bright bosonic channel.} A signal microwave mode $\hat a_S^{(k)}$, possibly entangled with an idler mode $\hat a_I^{(k)}$, is generated by Bob. The idler mode is stored in the lab, while the signal is sent through a noisy channel to Alice, who receive the noisy modes $\hat a'^{(k)}_S$. Alice modulates the phase of the signal by $\tilde \varphi_k=\phi+\varphi_k$, where $\phi$ and $\varphi_k$ belong to a pre-agreed discrete alphabet $\mathcal{A}$. Here, $\phi$ embeds the information to be sent, while $e^{-i\varphi_k}$ is an encoding operator. The value of $\varphi_k$ is taken uniformly at random in $\mathcal{A}$, and it is known only to Alice and Bob. The signal is then scattered back to Bob, who decodes it by applying a phase modulation $e^{i\varphi_k}$. This process is repeated $M$ times ($k=1,\dots,M$), for each symbol transmission. Bob performs a measurement on the modes $\{e^{i\varphi_k}\hat a_R^{(k)}, \hat a_I^{(k)}\}$ in order to discriminate between the possible values $\phi$. Eve performs a collective measurement on the modes $\{\hat w_\leftarrow^{(k)},\hat w_\rightarrow^{(k)} \}$ in order to understand whether Alice and Bob are communicating. If the average power of the signal modes is $O(\eta^2 N_B/\sqrt{n})$, then Alice and Bob are able to use covertly $n$ channel modes. This allows to transmit $O(\sqrt{n})$ number of bits in a secure way. In the $1-10$~GHz band, Bob's signals are generated at $20$~mK in order to suppress the thermal contribution and comply with the covertness conditions.}
\label{Setup}
\end{figure}

The two-way communication protocol is depicted in Fig.~\ref{Setup}. Alice  and  Bob  use $n=mM$ modes  of  the bosonic channel  simultaneously in order to communicate $m$ symbols. They  use $M$ modes  to transmit  a  symbol $\phi$ taken  from  a  discrete  alphabet $\mathcal{A}$. We  refer  to  each these $M$ channel  usages as  a {\it slot}. In each slot,  Bob generates $M$ independent and identically distributed (i.i.d) signal modes $\{\hat a_S^{(k)}\}$ ($k=1,\dots,M$) with $N_S>0$ average number of photons, and he sends the modes to Alice. The signal modes are possibly entangled with $M$ idler modes $\{\hat a_I^{(k)}\}$, which are retained in the lab by Bob for the measurement stage. The signals are generated at a low enough temperature to consider the signal-idler (SI) state as pure. Although the results of this article are general, we emphasize the application in the microwave spectrum, specifically in the range of operating frequencies of a cQED setup, i.e. $1$-$10$ GHz. In this range of frequencies, $T\simeq20$~mK is required to avoid thermal fluctuations. We refer to the {\it idler-free} case when the idler is absent, or, equivalently, when the signal and the idler are uncorrelated.
The signal modes are sent to Alice through a room-temperature channel ($T_B=300$~K), which is modeled as a beamsplitter. Alice receives the modes $\{\hat a'^{(k)}_S\}$, with
\begin{align}
\hat a'^{(k)}_S=\sqrt{\eta}~\hat a_S^{(k)} + \sqrt{1-\eta}~\hat h_\leftarrow^{(k)}.
\end{align}
Here, $\eta$ is the power transmitting rate of the channel and $\{\hat h_\leftarrow^{(k)}\}$ are independent thermal modes with $N_B$ average number of photons. The numerical value of $N_B$ depends on the signal operating frequency $\omega_k$ as $N_B=(e^{\beta\hbar \omega_k}-1)^{-1}$ with $\beta=(k_B T_B)^{-1}$, $k_B$ being the Boltzmann constant. In the $1-10$~GHz spectrum this results to values of the order $N_B\sim 10^3$, therefore we will emphasize the $N_B\gg1$ case. Alice modulates the phase of $\hat a'^{(k)}_S$ by $\tilde\varphi_k=\phi+\varphi_k$, with $\phi,\varphi_k\in\mathcal{A}$, generating the mode $e^{-i\tilde \varphi_k}\hat a'^{(k)}_S$. She then sends the signal back to Bob through the same channel. Here, $\phi$ embeds the symbol to be transmitted, while the phase-shift $e^{-i\varphi_k}$ is an encoding operation that Alice and Bob have secretly pre-shared. In other words, Alice and Bob uses a one-time-pad protocol to encode the symbol $\phi$. Bob receives the modes $\{\hat a_R^{(k)}\}$, with
\begin{align}
\hat a_R^{(k)} =\sqrt{\eta}~ e^{-i\tilde \varphi_k}\hat a'^{(k)}_S + \sqrt{1-\eta}~\hat h_\rightarrow^{(k)},
\end{align}
Here, $\{\hat h_\rightarrow^{(k)}\}$ are $M$ independent thermal modes identical to $\{\hat h_\leftarrow^{(k)}\}$. We also assume that the modes $\{\hat h_\leftarrow^{(k)}\}$ and $\{\hat h_\rightarrow^{(k)}\}$ are independent.
Bob applies the decoding transformation $e^{i\varphi_k}$ to the received mode $\hat a_R^{(k)}$. He then applies a discrimination strategy to the modes $\{e^{i\varphi_k}\hat a_R^{(k)},\hat a_I^{(k)}\}$ for distinguishing between the different symbols in $\mathcal{A}$. 

For a given symbol transmission $\phi$, we denote with $\rho_{\eta,\phi}$ the density matrix of Bob's state at the receiver, i.e. the state of the system defined by the modes $\hat a_R^{(k)}$ and $\hat a_I^{(k)}$. As we are working in the i.i.d. assumption, $\rho_{\eta,\phi}$ does not depend on $k$. In the following, we work under the $\eta\ll1$ assumption, corresponding to a very lossy thermal propagation channel. This is the case of non-directional antenna and/or unfavorable weather conditions. However, the theory can be extended to finite values of $\eta$ as well.  The number $M$ has to be chosen to be large enough in order to give Bob the chance of discriminating between the possible phases in $\mathcal{A}$ with high confidence. The measurement discriminating between the symbols depends on the adopted SI system. We consider mainly the Binary-Phase-Shift-Keying (BPSK) alphabet, when $\mathcal{A}=\{0,\pi\}$. However, we will discuss how  the results in this article can be extended to more complex alphabets. Different figures of merit can be used to quantify the performance of the optimal strategies to discriminate between the distinct modulations, depending on their a priori probabilities. Here, we discuss the case where all the modulations in the key have the same a priori probability of being realized, which is the most natural scenario for quantum communication. 

The setup can be mapped to quantum illumination~\cite{Tan08}, also referred to On-Off-Keying (OOK), where Alice modulates the amplitude of the signal. In fact, BPSK and OOK share the same optimal strategies in the $\eta\ll1$ limit (see Lemma~\ref{lemma1} of the Appendix). In addition, BPSK performs better than OOK for given transmitting power, as the distance of the symbols in the phase-space is larger. 
A similar setup has been studied in the optical domain~\cite{Shapiro09}. Here, a phase-insensitive amplification by Alice is required in order to add thermal noise and ensure security with respect to a passive Eve. In the low-frequency spectrum, the thermal noise is naturally present in the environment, so that no amplification is needed and covertness can be ensured. Instead, in Ref.~\cite{Shi20}, the authors discuss the advantage of using pre-shared entanglement between Alice and Bob for communication in noisy environment, finding that the number of covert bits that can be sent increases by a logarithmic factor with respect to the unentangled case. Although their setup falls in the one-way scenario, these results suggest that using quantum correlations may come with a logarithmic overhead in the capacity also in our case. 

\section{III. Optimal receiver performance}\label{sec:III}
In this section, we find the ultimate bounds on the receiver performance for the protocol described in Fig.~\ref{Setup}. These results hold for SI systems in any quantum state. We set the encoding operation to the identity, i.e. we fix $\varphi_k=0$. This is possible because both Alice and Bob have a pre-shared knowledge of $\varphi_k$, therefore this operation can be reversed by Bob. In the BPSK case, where the $\phi\in\{0,\pi\}$, our aim is to minimize the total error probability 
\begin{equation}
    p_{{\rm err}} = \frac{1}{2}\left[\Pr\left( \phi=\pi|\phi=0\right)+\Pr\left(\phi=0|\phi=\pi\right)\right],
\end{equation}
where $\Pr\left( \phi=a|\phi=b\right)$ is the probability of detecting a phase $\phi=a$ given that Alice has transmitted the symbol $\phi=b$. The main strategies to achieve this can be classified in  (i) {\it collective}, where the $M$ modes are allowed to be measured together, and  (ii) {\it local}, where the copies are measured separately, allowing classical communication between the measurements on the different copies. We consider the performance of a coherent state transmitter as reference for the correlated cases. In other contexts, such as in quantum illumination, coherent states transmitters are usually used as a {\it classical} reference. This choice is done mainly for two reasons: 1) they achieve the optimal error probability in the idler-free case and 2) they describe faithfully coherent signals that can be generated with classical technology.

\begin{figure}[t!]
	\centering
	\includegraphics[width=0.47\textwidth]{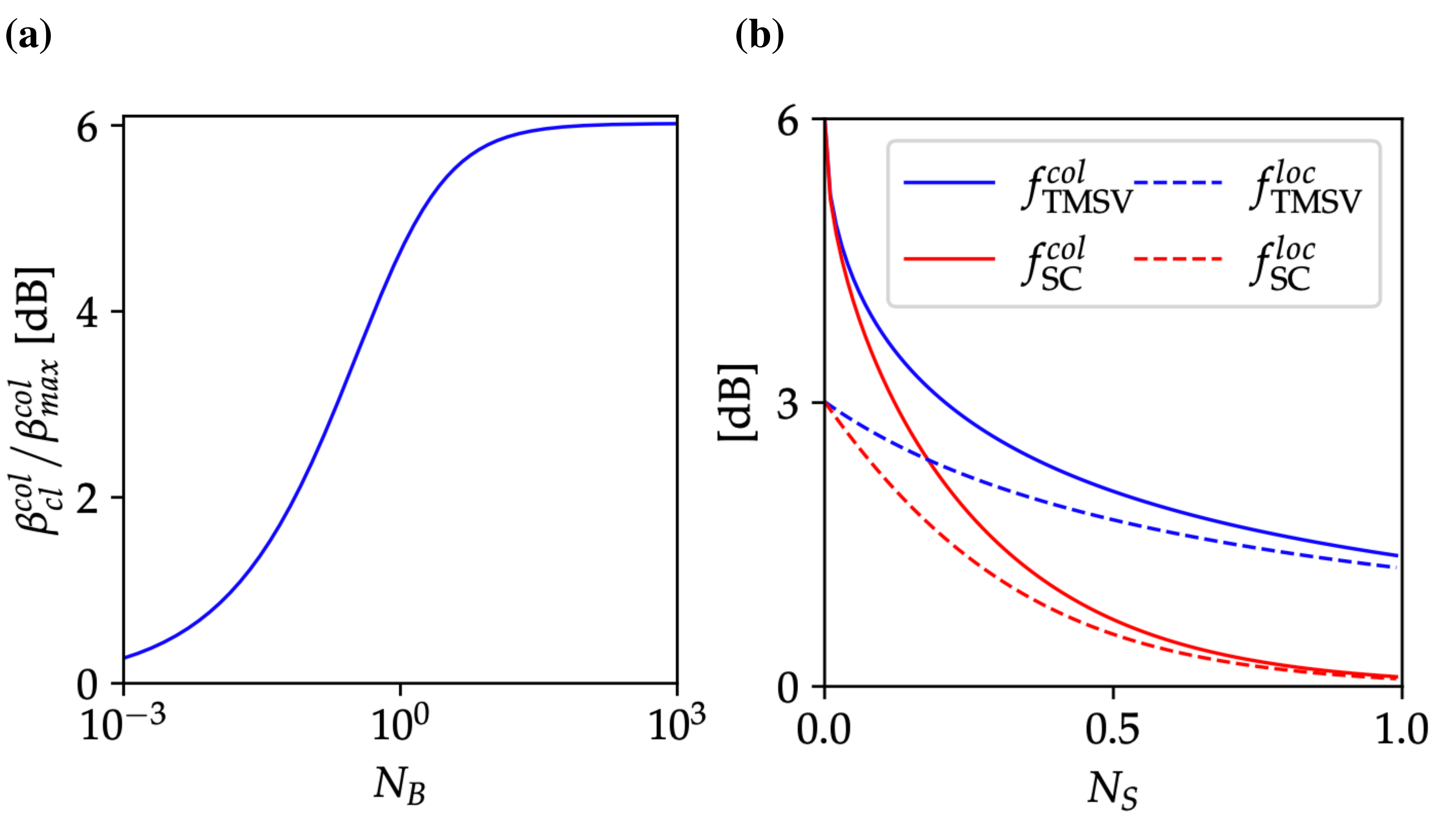}
\caption{{\bf Performance of the quantum correlated protocol with respect to the idler-free case.} {\bf (a)} Maximal achievable gain of the Chernoff bound of the quantum correlated case (denoted as $\beta^{col}_{max}$) with respect to the idler-free case ($\beta^{col}_{cl}$), depending on the average number of thermal photons in the environment $N_B$. For instance, for $N_B=1$, the maximal gain is about $4.6$~dB and it is reached in the $N_S\ll1$ limit. {\bf (b)} Comparison of the optimal receiver performance for Gaussian states and Schr\"odinger's cat states with respect to  the idler-free case, in the $N_B\gg1$ limit. The performance is quantified in terms of the error probability decaying exponent. Indeed, the quantities $g^{col}_{\rm TMSV, SC}=\beta^{col}_{\rm TMSV,SC}/\beta^{col}_{cl}$ and $g^{loc}_{\rm TMSV, SC}=\beta^{loc}_{\rm TMSV,SC}/\beta^{loc}_{cl}$ are plotted for $N_B\gg1$. The graphics shows how the advantage in using quantum correlations decays with the transmitting power $N_S$. Gaussian states perform better than Schr\"odinger's cat states for finite $N_S$.}\label{fig:3}
\end{figure} 

\subsection{A. Collective strategies}
The quantum Chernoff bound~\cite{Audenaert07,Calsamiglia08,PirandolaGaussian} provides an upper bound on the achievable error probability (EP) in the binary detection problem. Indeed, we have that $p_{{\rm err}}\leq \frac{1}{2} e^{-\beta_\eta M}$, where $\beta_\eta = -\min_{s\in(0,1)}\log\text{Tr}\,(\rho_{\eta,0}^{s}\rho_{\eta,\pi}^{1-s})$. This bound is tight for $M\gg1$, and its exponent $\beta_\eta$ can be used as figure of merit for quantifying the performance of the optimal discrimination protocol. Generally, one needs a collective measurement over all the $M$ modes in order to saturate this bound. There are few exceptions to this statement, for instance, when either one of $\rho_{\eta, 0}$ or $\rho_{\eta,\pi}$ is close enough to a pure state~\cite{Calsamiglia08}, or for coherent state illumination in a bright environment. In the following, we focus on computing analytically $\beta_\eta$ up to the first relevant order in $\eta$, using a metric-based approach. Since the expansion of $\beta_\eta$ to the first order of
$\eta$ is zero (Lemma~\ref{Chernoff} of the Appendix), we can define the figure of merit for collective strategies as
\begin{equation}
    \beta^{col} \equiv \lim_{\eta \rightarrow 0} \frac{\beta_\eta}{\eta^2}.
\end{equation}
 This metric provides us a framework for conducting comparative analysis between different transmitters. In addition, the found relations can be analytically computed, providing  an insight on the scaling of the performance with respect to the system parameters. We are particularly interested in the $N_S\ll1$ and $N_B\gg1$ limits of $\beta^{col}$, where the protocol based on quantum correlations will present the maximal advantage with respect to an uncorrelated input state with the same power. In addition, we will see that unconditional security by means of covertness can be ensured in this regime. In the following, for simplicity, we will refer to $\beta^{col}$ as the quantum Chernoff bound (QCB). However, we stress that in the literature the QCB is generally referred as $\beta_\eta$. Let us denote as $\beta^{col}_{\rm free}$ the value of $\beta^{col}$ if the input is idler-free, i.e. when the signal and the idler are uncorrelated.  The following result define the optimal receiver performances.

\begin{theo}{\bf [Ultimate receiver-EP bound]}~\label{QCB}\\
The following bounds holds for the receiver performance with collective strategies:
\begin{align}
\beta^{col}&\leq \min\left\{\frac{4N_S}{1+N_B},\frac{1}{\sqrt{c_B}}\frac{N_S+\frac{1}{2}}{1+N_B}\right\}, \\
\beta_{\rm free}^{col}&\leq\frac{4}{\left(1+\sqrt{c_B}\right)^2}\frac{N_S}{1+N_B}\equiv \beta_{cl}^{col},
\end{align}
where $c_B=\frac{N_B}{1+N_B}$. The idler-free case bound is saturated by  coherent state signals.
\end{theo}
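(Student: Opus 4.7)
My plan is to extract the $O(\eta^2)$ coefficient of the Chernoff exponent $\beta_\eta$ via a local expansion, dominate it separately by two Fisher-information-type quantities to obtain the general upper bounds, and compute the coherent-state benchmark explicitly via the Gaussian formalism. Since both $\rho_{\eta,0}$ and $\rho_{\eta,\pi}$ reduce to the same thermal-background state $\rho_0$ at $\eta=0$, the expansion of $\log\text{Tr}(\rho_{\eta,0}^s\rho_{\eta,\pi}^{1-s})$ around $\eta=0$ begins at order $\eta^2$, as in Lemma~\ref{Chernoff}. Writing $\rho_{\eta,\phi}=\rho_0+\eta\Delta_\phi+O(\eta^2)$, a Dyson expansion of the matrix power $\rho^s$ about $\rho_0$ identifies the $\eta^2$ coefficient of $-\log\text{Tr}(\rho_{\eta,0}^s\rho_{\eta,\pi}^{1-s})$ as a quadratic form in the ``signal imprint'' $\Delta_0-\Delta_\pi$; by BPSK symmetry the minimizer sits at $s=1/2$, yielding $\beta^{col}$ as a Chernoff-type metric of $\Delta_0-\Delta_\pi$ evaluated against the Gaussian background.

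For the first bound $4N_S/(1+N_B)$ I would dominate this Chernoff metric by the SLD quantum Fisher information using the local inequality $\beta^{col}\le F_Q/8$ (applied to the one-parameter family $\phi\mapsto\rho_{\eta,\phi}$), then evaluate $F_Q$ for the phase-generated direction against a thermal background of variance $1+N_B$. Because $\Delta_0-\Delta_\pi$ is linear in the signal ladder operators $\hat a_S,\hat a_S^\dagger$ with weights set by $\eta$, the Fisher quantity comes out proportional to $\langle\hat a_S^\dagger\hat a_S\rangle/(1+N_B)=N_S/(1+N_B)$, with the prefactor $4$ coming from the SLD formula. For the tighter bound $(N_S+\tfrac12)/[\sqrt{c_B}(1+N_B)]$, my approach is to purify the thermal bath by a TMSV of squeezing $\tanh^2 r=c_B$, upper-bound $\beta_\eta$ by the Chernoff of the pure-state purifications $|\Psi_{\eta,0}\rangle,|\Psi_{\eta,\pi}\rangle$ via monotonicity of $\beta_{QC}$ under the partial trace, and evaluate $|\langle\Psi_{\eta,0}|\Psi_{\eta,\pi}\rangle|^2$ in closed form using Gaussian overlap formulas; the factor $\sqrt{c_B}=\tanh r$ is the characteristic Bures signature of a purified thermal state, and produces the sharper constant for small $N_S$.

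For the idler-free case saturated by coherent states I would compute $\beta^{col}_{cl}$ directly: the coherent input $|\sqrt{N_S}\rangle$ produces return states that are displaced thermal with displacements $\pm\eta\sqrt{N_S}$ and thermal variance $(1-\eta^2)N_B$, and the closed-form Gaussian Chernoff bound between two such states at small displacement separation evaluates at $s=1/2$ to exactly $4N_S/[(1+\sqrt{c_B})^2(1+N_B)]$. To upgrade this to optimality $\beta^{col}_{\rm free}\le\beta^{col}_{cl}$ over all idler-free transmitters, I would use joint concavity of $(\rho,\sigma)\mapsto\text{Tr}(\rho^s\sigma^{1-s})$ on $s\in(0,1)$ together with the $P$-representation of the signal, reducing any classical input to a convex combination of coherent benchmarks. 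The main obstacle is extending this optimality statement to non-classical idler-free inputs (e.g.\ Fock or squeezed) whose $P$-function is not a probability measure; there I would rely on a Gaussian-extremality argument over fixed first and second moments, and this is the step where most of the technical care would be required.
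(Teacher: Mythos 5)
Your overall skeleton --- reduce $\beta^{col}$ to the second-order (metric) expansion of the Chernoff quantity around the common background state $\rho_0$ --- matches the paper, which invokes the Calsamiglia et al.\ formula to get the exact coefficient in the Schmidt basis of the signal--idler state,
\begin{equation*}
\beta^{col}=\frac{4}{1+N_B}\sum_{k,k'}\frac{p_kp_{k'}\,|\langle w_{k'}|\hat a_S|w_k\rangle|^2}{\bigl[\sqrt{p_{k'}}+\sqrt{p_k}\sqrt{c_B}\bigr]^2},
\end{equation*}
and your first bound is essentially the paper's: dominating the QCB metric by a Fisher-type quantity is the same as the elementary estimate $(\sqrt{a}+\sqrt{b})^2\ge a+b$ followed by $p_{k'}/(p_{k'}+p_kc_B)\le1$ and completeness, giving $4N_S/(1+N_B)$ (though your prefactor ``$F_Q/8$'' needs bookkeeping care). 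However, your route to the \emph{second} bound has a genuine gap. Purifying and using data processing gives, at best, $\beta_\eta\le-\log F^2$ (Uhlmann-optimal purification), and for infinitesimally separated states this collapses back to the Bures/QFI bound --- i.e.\ it can only reproduce a bound of the first type, never the constant $\tfrac{1}{\sqrt{c_B}}\tfrac{N_S+1/2}{1+N_B}$. Worse, the \emph{specific} purification you propose (TMSV purification of the bath, keeping the environment output) yields an overlap controlled by the variance of the beamsplitter generator, $\Delta\hat G^2\sim N_S(1+N_B)+(N_S+1)N_B$, which \emph{grows} with $N_B$ and is therefore vacuous in the bright regime. The paper instead gets the $1/\sqrt{c_B}$ bound in one line from the AM--GM inequality $p_kp_{k'}/(\sqrt{p_{k'}}+\sqrt{p_k}\sqrt{c_B})^2\le(p_k+p_{k'})/(8\sqrt{c_B})$ applied inside the exact metric formula, using $\sum_{kk'}(p_k+p_{k'})|\langle w_{k'}|\hat a_S|w_k\rangle|^2=\langle\hat a_S^\dagger\hat a_S\rangle+\langle\hat a_S\hat a_S^\dagger\rangle=2N_S+1$. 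This bound is the binding one for $N_S\gtrsim1$ and is what kills the quantum advantage at large $N_S$, so it cannot be skipped.

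The idler-free optimality claim also has a real hole, which you correctly sense but do not close. Joint concavity plus the $P$-representation only handles classical inputs, and ``Gaussian extremality over fixed first and second moments'' is not the right mechanism: the relevant quantity at Schmidt rank one is $\beta^{col}_{\rm free}=\tfrac{4|\langle w|\hat a_S|w\rangle|^2}{(1+\sqrt{c_B})^2(1+N_B)}$, i.e.\ the leading-order exponent is controlled by the \emph{mean-field amplitude}, not by Gaussianity (a Fock state has $\langle w|\hat a_S|w\rangle=0$ and performs strictly worse, not comparably). The paper closes this in one step with Cauchy--Schwarz/H\"older, $|\langle w|\hat a_S|w\rangle|^2\le\|\hat a_S|w\rangle\|^2=N_S$, saturated exactly by coherent states, which covers all pure idler-free inputs at once; mixed signal--idler states are then handled by applying the general bounds to a purification. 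I would recommend abandoning the purification and $P$-function detours and working directly from the exact Schmidt-basis metric formula, from which all three bounds follow by elementary inequalities.
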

This is in agreement with the result of the standard quantum illumination protocol. We notice that $\beta_{cl}^{col}\simeq N_S/N_B$ in the $N_B\gg1$ limit. The general bound in Theorem~\ref{QCB} is tight for $N_S\ll1$ and for $N_S,N_B\gg1$. By a direct comparison, we see that the optimal achievable gain with respect to the idler-free case is $6$~dB, and the maximal advantage can be achieved when $N_B\gg1$ and $N_S\ll1$.  This result can be understood by noticing that in the low signal-photon regime the protocol can be approximated by Gaussian quantum illumination, where the detection problem can be mapped to discriminating between two coherent states with amplitude proportional to the two-mode correlations at the receiver level~\cite{Zhuang17}. Here, the Kennedy receiver is optimal, and it shows a $6$~dB advantage with respect to the idler-free case. No advantage can be detected in a vacuum environment ($N_B\ll1$), which is the case of the optical systems. Interestingly, our bounds show that even in the $N_B\gg1$ limit the gain decreases with increasing number of signal photons $N_S$, achieving the same performance of a coherent state transmitter in the $N_S\gg1$ limit. This also means that amplifying the source is not a possible to implement a quantum illumination protocol, ruling out any gain claimed in a recent experiment~\cite{Shabir}. The setup in Fig.~\ref{Setup} is particularly relevant for studying entanglement-assisted low-frequency communication in very noisy environment. However, one must note that the advantage of using quantum correlations is kept when the environment is not bright, see Fig.~\hyperref[fig:3]{3a}. For instance, a $4.6$~dB maximal advantage can be achieved for $N_B\simeq1$, implying that the advantage in using quantum correlations is not limited to the $N_B\gg1$ case.\\

\subsection{B. Quantum estimation strategies}
An approach based on the quantum estimation of the amplitude modulation has been developed in Ref.~\cite{Sanz17} in the quantum illumination context. This is less experimentally demanding than the collective strategy, as it does not require the interaction between the $M$ copies of the received signal. However, it comes at some loss in the error exponent of the EP, quantified as at least $3$~dB with respect to the optimal collective strategy. Here, we use the same concept in order to deal with the BPSK case. We address this approach as {\it local} strategy, as opposite of the collective strategies previously discussed. First, we notice that the received modes $\{\hat a_R^{(k)}\}$ can be expressed as
\begin{align}
    \hat a_R^{(k)}=\eta~e^{-i\phi} \hat a_S^{(k)}+\sqrt{1-\eta^2}~\hat h^{(k)},
\end{align}
where $\hat h^{(k)}\equiv \sqrt{\frac{\eta}{1+\eta}}~e^{-i
\phi}\hat h_\leftarrow^{(k)}+\sqrt{\frac{1}{1+\eta}}~\hat h_\rightarrow^{(k)}$ are thermal modes with $N_B$ average number of photons. We can thus optimally estimate the parameter $\kappa\equiv\eta~e^{-i\phi}\in \mathbb{R}$, obtaining a value $\kappa_{est}$, and deciding towards the hypothesis $[\phi=0]$ if $\kappa_{est}> 0$ or the hypothesis $[\phi=\pi]$ if $\kappa_{est}<0$. We refer to this strategy as ``threshold discrimination strategy".   The main figure of merit quantifying the quantum estimation performance is the quantum Fisher information (QFI), defined as~\cite{Paris09}
\begin{align}
    F=\sum_{mn}\frac{|\langle \phi_m|d\rho|\phi_n\rangle|^2}{\lambda_m+\lambda_n},
\end{align}
where $d\rho=(\partial_\kappa \rho_{\kappa})_{|\kappa=0}$, with $\rho_\kappa\equiv \rho_{\eta.\phi}$, and $\lambda_m$ is the eigenvalue of $\rho_0$ corresponding to the eigenstate $|\phi_m\rangle$. This is due to the Cramer-Rao bound, which asserts the limit of the achievable precision of an unbiased estimator $\hat \kappa$: $\Delta \hat \kappa^2 \geq 1/M F$. An estimator saturating the Cramer-Rao bound is given by the mean over the $M$ single-copy measurements of the observable $\hat O= \hat L/F$, where $\hat O =  \sum_{mn}\frac{\langle \phi_m|d\rho|\phi_n\rangle}{\lambda_m+\lambda_n}|\phi_m\rangle\langle \phi_n| $ is the symmetric logarithmic derivative computed at $\kappa=0$.  Due to the central limit theorem, the EP for the threshold discrimination strategy is $p_{\rm err}\simeq 1-\text{erf}~(\eta\sqrt{FM/2})\leq\frac{1}{2} e^{-\eta^2FM/2}$ for $M\gg1$. The previous discussion holds whenever one has an a priori knowledge of the neighborhood where $\kappa$ belongs to (in our case $\kappa\ll1$). If no assumptions of this sort can be made, generally the optimal strategy consists of a two-stage adaptive protocol: use $M^{1/\delta}$ (with $\delta>1$) copies to estimate the neighborhood and then use the rest of the copies to optimal estimate the parameter. This provides the same asymptotic performance as when the information on the neighborhood is provided. The same adaptive protocol can be used to generalize the ideas of this article to more complex alphabets, by first having a rough estimation of the phase $\phi$, and then rotate the system in order to maximize the classical Fisher information~\cite{Shi20}. 

Similarly to the case of collective strategies, we will adopt the exponent of the EP to the first relevant order in $\eta$ as figure of merit, i.e. $\beta^{loc}\equiv F/2$. Let us denote with $\beta^{loc}_{\rm free}$ the value of $\beta^{loc}$ in the idler-free case. We have the following bounds on the achievable EP decaying rate using quantum estimation methods.
\begin{theo}{\bf [Ultimate QFI bound]}\label{QFIbound}\\ 
The following bounds holds for the receiver performance with local strategies:
\begin{align}
    \beta^{loc}&\leq \min\left\{\frac{2N_S}{1+N_B},\frac{1}{\sqrt{c_B}}\frac{N_S+\frac{1}{2}}{1+N_B}\right\},\\
   \beta_{\rm free}^{loc}&\leq\frac{2}{1+c_B}\frac{N_S}{1+N_B}\equiv \beta^{loc}_{cl},
\end{align}
where $c_B=\frac{N_B}{1+N_B}$. The idler-free case is saturated by coherent state signals, in which case the optimal detector is homodyne.
\end{theo}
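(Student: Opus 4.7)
The plan is to compute the quantum Fisher information (QFI) $F$ of the family $\{\rho_{\eta,\phi}\}$ at $\kappa\equiv\eta e^{-i\phi}=0$ and apply the identification $\beta^{loc}=F/2$ already derived in the text from the Cramér--Rao bound and the central limit theorem. Using the compact channel form given above, the map from $\hat a_S$ to $\hat a_R$ is a beamsplitter of amplitude $\kappa$ mixing the signal with a thermal mode $\hat h$ of mean photon number $N_B$. By monotonicity of the QFI under partial trace I take the signal--idler input state to be pure.

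For the bound $\beta^{loc}\leq 2N_S/(1+N_B)$, the main tool is the Fujiwara--Imai / Escher--de Matos Filho--Davidovich purification upper bound on the channel QFI. Purifying $\hat h$ by a two-mode squeezed vacuum $|\mathrm{TMSV}\rangle_{hh_0}$ with $\sinh^2 r=N_B$ promotes the channel to a unitary $U_\kappa$ on $(\hat a_S,\hat h)$, and the QFI then satisfies $F\leq 4\min_{V_\kappa}\mathrm{Var}(\hat G_V)$, where $V_\kappa$ is any $\kappa$-dependent unitary on the environment and $\hat G_V$ is the generator of $U_\kappa V_\kappa$ at $\kappa=0$. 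Choosing $V_\kappa$ so as to produce the ancilla dressing $\hat W=i\beta(\hat a_S\hat h_0-\hat a_S^\dagger\hat h_0^\dagger)$, the modified generator becomes $\hat G_V=\hat a_S^\dagger\hat E+\hat a_S\hat E^\dagger$ with $\hat E=i(\hat h-\beta\hat h_0^\dagger)$. Evaluating on $|\psi\rangle_{SI}\otimes|\mathrm{TMSV}\rangle_{hh_0}$ and using $\langle\hat h\hat h_0\rangle=\sqrt{N_B(1+N_B)}$, the variance reduces to a quadratic form in the real parameter $\beta$ whose minimum equals $N_S(N_S+1)/[N_S(2N_B+1)+N_B+1]$. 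The elementary inequality $(N_S+1)(1+N_B)\leq N_S(2N_B+1)+N_B+1$ then yields $F\leq 4N_S/(1+N_B)$, uniformly over signal--idler inputs because only $N_S=\langle\hat a_S^\dagger\hat a_S\rangle$ appears in the final expression.

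For the second entry $\beta^{loc}\leq(N_S+\tfrac{1}{2})/[\sqrt{c_B}(1+N_B)]$, I would adopt a different ancilla choice producing a dressing proportional to the thermal-number generator $\hat h^\dagger\hat h$, which expresses the bound through the symmetrized signal energy $\tfrac{1}{2}\langle\{\hat a_S^\dagger,\hat a_S\}\rangle=N_S+\tfrac{1}{2}$, with denominator $\sqrt{c_B}(1+N_B)=\sqrt{N_B(1+N_B)}$ arising as the Bures-metric coefficient for neighbouring thermal states differing in mean photon number. Combining the two estimates via minimum then proves the first line of the theorem.

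For the idler-free bound saturated by coherent states, inserting $|\alpha\rangle_S$ with $|\alpha|^2=N_S$ produces at Bob a displaced thermal state of mean $\kappa\alpha$ and covariance $(1+2N_B)/2$; the Gaussian QFI formula for the mean yields $F=4N_S/(1+2N_B)$, and the identity $(1+N_B)(1+c_B)=1+2N_B$ rewrites this as $2\beta^{loc}_{cl}$. Optimality of coherent states among idler-free inputs follows from the data-processing inequality applied to the standard pure-loss-plus-amplifier decomposition of the thermal-loss channel. Homodyne detection along the $\alpha$-quadrature saturates the classical Fisher information, since for a displaced thermal Gaussian state the symmetric logarithmic derivative is linear in the displacement quadrature. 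The main obstacle is the identification of the optimal environmental dressing $V_\kappa$ in the Fujiwara--Imai bound: it must be chosen so that the resulting variance depends only on $N_S$ and $N_B$ and admits a closed-form minimization, which is transparent for the first bound via the ansatz above but requires a more delicate choice for the second.
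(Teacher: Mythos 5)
Your route is genuinely different from the paper's. The paper does not use a purification/Fujiwara--Imai channel bound at all: it computes the QFI \emph{exactly} to leading order in the Schmidt basis of the signal--idler state, obtaining the closed form $F=\frac{4}{1+N_B}\sum_{k,k'}\frac{p_kp_{k'}}{p_{k'}+p_kc_B}|\langle w_{k'}|\hat a_S|w_k\rangle|^2$ (Lemma~\ref{lemmaQFI}), and then both entries of the minimum follow from elementary inequalities applied term by term: $\frac{p_{k'}}{p_{k'}+p_kc_B}\leq1$ plus the completeness relation gives $2N_S/(1+N_B)$, and the arithmetic--geometric mean inequality $\frac{p_kp_{k'}}{p_{k'}+p_kc_B}\leq\frac{p_k+p_{k'}}{4\sqrt{c_B}}$ gives the second entry. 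For the first entry your variational argument does work: I checked that $\min_\beta\mathrm{Var}(\hat G_V)=N_S(N_S+1)/[N_S(2N_B+1)+N_B+1]$ with your ansatz, and the final inequality is correct. But you should justify why a ``dressing'' containing $\hat a_S$ is an admissible environment unitary: in the Escher/Fujiwara--Imai framework the purification freedom acts only on the environment output, and the reason your $\hat W$ is legal is that at $\kappa=0$ the beamsplitter is a full swap, so the input signal mode sits entirely in the environment output port. Without that remark the step looks illegitimate.

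There are, however, two genuine gaps. First, the second entry of the minimum, $\frac{1}{\sqrt{c_B}}\frac{N_S+\frac{1}{2}}{1+N_B}$, is not proven: you only say you ``would adopt'' a dressing proportional to $\hat h^\dagger\hat h$ and yourself flag that the choice is delicate. No candidate generator is written down, no variance is computed, and it is not obvious that a number-operator dressing produces the symmetrized energy $N_S+\frac{1}{2}$ with the coefficient $1/\sqrt{N_B(1+N_B)}$; in the paper this entry is a one-line consequence of AM--GM on the exact formula, and your method needs an explicit construction to match it. Second, your argument that coherent states are optimal among idler-free inputs (``data-processing applied to the pure-loss-plus-amplifier decomposition'') does not establish the claim: that decomposition controls the channel, not which fixed-energy input maximizes the QFI. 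The actual reason, which falls out of the exact formula, is that for a Schmidt-rank-one input $F\propto|\langle w|\hat a_S|w\rangle|^2\leq\|\hat a_S|w\rangle\|^2=N_S$ by Cauchy--Schwarz, with equality iff $|w\rangle$ is a coherent state; you should replace your argument by this one (your explicit Gaussian computation of $F=4N_S/(1+2N_B)$ for coherent inputs and the homodyne saturation are fine).
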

Similarly to Theorem~\ref{QCB}, the bound in Theorem~\ref{QFIbound} is tight for $N_S\ll1$ and for $N_S,N_B\gg1$. It follows that the maximal advantage with respect to the classical case is $3$~dB in the EP exponent if we adopt a threshold discrimination strategy. Theorem~\ref{QFIbound} also implies that the optimal detector in the classical case can be implemented with local measurements in the $N_B\gtrsim 2$ regime, as in this case $\beta_{cl}^{loc}\simeq \beta_{cl}^{col}$. However, this is not anymore valid in the $N_B\lesssim 1 $ regime, where collective measurements start to perform better, achieving $\beta_{cl}^{col}\simeq 2 \beta_{cl}^{loc}$ in the $N_B\ll1$ limit. 

\section{IV. Transmitter examples}\label{Sec:add}

The aim of this section is to discuss two topical transmitter examples whose optimal receiver saturates the ultimate bounds in the correlated case. For these states, we have derived explicit formulas of $\beta^{col}$ and $\beta^{loc}$, showing how they scale with respect to the system parameters. We also show a time-bandwidth estimation for a typical communication scenario.

\subsection{A. Two-mode squeezed vacuum state transmitter}

TMSV states are defined as
\begin{equation}\label{TMSS1}
|\psi\rangle_{{\rm TMSV}}=\sum_{n=0}^\infty \sqrt{\frac{N_S^n}{(1+N_S)^{1+n}}}|n\rangle_I|n\rangle_S,
\end{equation}
where $\{|n\rangle\}_{n=0}^\infty$ is the Fock basis. They have been thoroughly studied in the context of QI because they are a good benchmark to show a quantum advantage and because they are experimentally easy to generate, regardless of the frequency regime. The performance for the optimal receiver of a TMSV state transmitter are given by
\begin{align}
        \beta_{\rm TMSV}^{col}&=\frac{4}{\left(1+\sqrt{c_Sc_B}\right)^2}\frac{N_S}{1+N_B}\\
        \beta_{\rm TMSV}^{loc}&= \frac{2}{1+c_Sc_B}\frac{N_S}{1+N_B},
\end{align}
where $c_S=\frac{N_S}{1+N_S}$ and $c_B=\frac{N_B}{1+N_B}$ (see Appendix~\hyperref[suppl:I]{A}). The optimal receiver for a threshold discrimination strategy consists in measuring in the eigenbasis of $\hat a_I\hat a_R +\hat a_I^\dag \hat a_R^\dag$.
It is clear that TMSV states saturate the bound of Theorem~\ref{QCB} in the $N_S\ll1$ limit. In addition, the advantage of using the optimal detector for TMSV states decays slowly with increasing $N_S$, making the protocol useful also for finite number of signal photons. The detectors achieving the maximal gain are known for both the collective (only for $N_S\ll1$~\cite{Zhuang17}) and the local (for any $N_S$~\cite{Guha09}) cases. Generally, they involve efficient photon-counting devices, which are yet to be developed in the microwave regime. 
A possible solution is the use optomechanical transducers into optical frequencies, where sensitive detectors are available~\cite{Barzanjeh15}. However, current optomechanical transducers are still in infancy, as they suffer from low efficiencies and high thermal added noise.  
A different solution is to use a qubit as single-photon detector. This approach has the advantage of seamless integration with cQED platforms, as dispersive qubit measurements can be implemented already in the lab~\cite{Walter17,Buisson20}. However, single-photon detection devices so far have achieved only a $70~\%$ fidelity~\cite{Huard2020}, making this approach currently unsuitable for practical applications.
On a different note, the idler storage must be carefully considered in order to understand how quantum correlations can be useful in practice. In cQED, memory elements based on a coaxial $\lambda/4$ resonator with coherence time of nearly 1 ms have been demonstrated~\cite{Reagor16}. This corresponds to $300$~Km of free-space propagation of light. Another promising alternative consists in transferring the idler bosonic degrees of freedom to the delay line based on surface acoustic waves~\cite{SAW1,SAW2}. 

\subsection{B. Schr\"odinger's cat state transmitter}
We now discuss the performance of the protocol based on SC states, created by the interaction of a qubit with a continuous-variable signal. It comes as no surprise that SC states show the same advantage of Gaussian states for $N_S\ll1$, because in this limit the two states approximate each other. However, the underlying physics is different. In fact, the SC scheme relies on qubit measurements, and provides us naturally with a digital way to store the idler in a cQED setup.
 
\begin{equation}\label{SCS}
    |\psi\rangle_{\rm SC} = \frac{1}{\sqrt{2}}\left[|+\rangle_I | \alpha\rangle_S + |-\rangle_I |-\alpha\rangle_S\right],
\end{equation}
where $|\pm\rangle=\frac{1}{\sqrt{2}}(|g\rangle \pm |e\rangle)$ are eigenstates of the Pauli operator $\hat \sigma_x$, $|\alpha\rangle$ is a coherent state with amplitude $\alpha>0$ ($N_S=|\alpha|^2$), assumed to be real for simplicity. Of particular interest will be the case of $|\alpha|\ll1$, that shows the maximal advantage with respect to the classical case. This state can be written in the Schmidt decomposition as 
\begin{equation}\label{SCShmidt}
    |\psi\rangle_{\rm SC}= \sqrt{\lambda_+}|g\rangle|\alpha_+\rangle + \sqrt{\lambda_-}|e\rangle|\alpha_-\rangle,
\end{equation}
where $\lambda_{\pm}=\frac{1}{2}(1\pm e^{-2N_S})$ and $|\alpha_\pm\rangle=\frac{1}{2\sqrt{\lambda_{\pm}}}[|\alpha\rangle\pm|-\alpha\rangle]$.
The performance of the optimal receiver for a SC-state transmitter are given by 
\begin{align}
    \beta^{col}_{\rm SC}&= f^{col}_{\rm SC}(N_S,N_B)\frac{N_S}{1+N_B},\\
    \beta^{loc}_{\rm SC}&= f^{loc}_{\rm SC}(N_S,N_B)\frac{N_S}{1+N_B},
\end{align}
where 
\begin{align}
        f^{col}_{\rm SC} &\stackrel{N_B\gg1}{=} 4-8\sqrt{N_S} +O(N_S)\\ 
        f^{loc}_{\rm SC} &\stackrel{N_B\gg1}{=}2-4N_S+O(N_S^2). 
\end{align}
The optimal threshold discrimination strategy in the $N_B\gg1$ regime consists in measuring in the eigenbasis of the observable $\hat O_{opt}=\hat \sigma^-[\lambda_+ \hat a_R +\lambda_-\hat a_R^\dag] +c.c.$.
The exact expressions of $f^{col}_{\rm SC}$ and $f^{loc}_{\rm SC}$ are given in Appendix~\hyperref[suppl:I]{A}. A comparison with the TMSV state case is shown in Fig.~\hyperref[fig:3]{3b}. As expected, the maximal gain can be achieved for $N_S\ll1$. In addition, the gain decays exponentially with increasing $N_S$. In fact, the observable $\hat \sigma_x(\hat a_R+\hat a_R^\dag)$ is optimal for $N_S\gtrsim 1$, therefore the classical mixed state $\frac{1}{2}[|+\rangle_{I}\langle+|\otimes|\alpha\rangle_{S}\langle \alpha|+|-\rangle_{I}\langle-|\otimes|-\alpha\rangle_{S}\langle -\alpha|]$ performs the same as $|\psi\rangle_{\rm SC}$ in this regime. This loss of gain for finite $N_S$ can be mitigated by considering states with larger Schmidt rank, i.e. $\frac{1}{\sqrt{d}}\sum_{k=0}^{d-1} |w_{k}\rangle_I|\alpha_k\rangle_S$, where $\alpha_k=\sqrt{N_S}e^{i\frac{2\pi k}{d}}$ and the idler is a $d$-level system with $\langle w_k|w_{k'}\rangle=\delta_{k,k'}$~\cite{Sanz17}. This state can be implemented by letting several transmon qubits interacting with the same resonator. However, in this case the optimal detector is complicated and we will not consider it in the implementation discussion. 

\subsection{C. Time-bandwidth estimation}

If we consider a microwave implementation, where  $N_B\simeq10^3$ for $T=300$~K at $5$~GHz, we need a time-bandwidth product $M\simeq \frac{\beta_{cl}}{\beta_{\rm TMSV}}\frac{N_B}{\eta^2 N_S}\log(1/p_{err})$ to reach an error probability of $p_{err}$. If we assume propagation in free space at ambient condition, where power losses are around $l\simeq0.01$~dB/Km for the considered spectrum, and an antenna area of $0.1$~m$^2$ for $R\simeq1$~m space propagation, this gives us $\eta=10^{-Rl/10}\frac{A_R}{4\pi R^2} \simeq 10^{-2}$~\cite{Karsa2020, Dic}.  If we choose $N_S\simeq 0.01$ average number of photons per mode, where TMSV states show a close to optimal gain with respect to the classical protocol, then we require a time-bandwidth product of $M\simeq 10^9 \frac{\beta_{cl}}{\beta_{\rm TMSV}} \log(1/p_{err})$. The error probability is chosen depending on the circumstance. In a communication scenario, one needs to perform error correction where in practice error probabilities of the order of $10^{-2}$ are needed. This gives us a time bandwidth product of the order of $M\sim 10^{9}$. This number can be considerably lowered by using directional antennas. These are made typically as phased arrays, comprising tens or hundreds of small elements with a phase difference between them. This phase difference is chosen such that the radiated field is maximum, due to constructive interference, along the desired direction.~\cite{multi} Assuming that an effective way of measuring the power of large bandwidth signals with few-photons sensitivity is available, then a Gaussian state protocol is arguably the best option for implementing the ideas presented this article. In fact, the ability of generating large bandwidth Gaussian signals would reduce the time complexity of the protocol. Even though at the present stage Gaussian states remains the main solution for sensing and metrology in noisy regime, SC states provides a way to avoid photon-detection. 

\section{V. Covert quantum communication}\label{sec:IV}
We now discuss the possibility of performing secure quantum communication using the setup depicted in Fig.~\ref{Setup}, by exploiting recent results about covert quantum communication~\cite{Bash15,Arrazola16,Bullock2020}. The basic idea is to protect the content of the message to be transmitted by covering the existence of the carrier in a given bandwidth and temporal frame. In this context, Eve's main task becomes to understand whether a message is being transmitted or not through the channel. Here, we provide bounds for Eve's detection probability depending on the receiver performance. The results of this section holds for states which well approximates Gaussian states in the $N_S\ll1$ limit. We prove that $\bar m=O(\sqrt{n})$ number of bits securely transmittable over $n$ channel usages with an arbitrary small EP. We also show how quantum correlations can increase the bit transmission rate by a constant factor, which depends on the adopted strategy (i.e. collective or local). 

\subsection{A. Square-root law}
A natural way of defining covertness consists in bounding from below the probability of detecting that communication between Alice and Bob is happening.
\begin{definition}{\bf [Covertness criteria]} A communicating system is $\delta$-covert over $n$ channel usages if Eve's EP in discerning between the equally likely hypothesis of communication happening or not-happening is $P^{{\rm (Eve)}}\geq \frac{1}{2}-\delta$ for $n$ large enough.
\end{definition}
Ideally, we would like to have $\delta$ as smaller as possible by still being able to communicate a finite number of bits. 
Generally, covert communication is possible because Eve does not have control at least to a part of the environmental channel~\cite{Bash15}. This assumption is not radical, as in the low-frequency regime at room temperature there is an unavoidable noise dictated by the laws of physics. We assume that Bob's and Alice's places, where the state manipulation and the measurements are implemented, are sealed, and that the signals are sent directly to a room temperature environment where Eve may be placed. We also assume, for simplicity, that the part of the channel that Eve cannot control does not change while communication is in progress. The latter assumption can be relaxed by analyzing more general fading communication channel models, where the amplitude losses and/or the signal phases are random variables~\cite{Zhuang2017}. In addition, 
we provide Alice with the capability of implementing {\it truly} random phase modulations on her signal on the alphabet $\mathcal{A}$. This is an important requirement for ensuring covertness in the two-way setup. 

Alice and Bob use $n=mM$ modes of the bosonic channel simultaneously. They use $M$ modes to transmit a symbol taken from a discrete alphabet $\mathcal{A}$.  In addition, they use a publicly available codebook $\mathcal{C}$ that maps $\bar m$-bit input blocks to $m$-symbol codewords from $\mathcal{A}^m$, with $\bar m<m$, by generating $2^{\bar m}$ codeword sequences, i.e. $\mathcal{C}=\{a_k\in\mathcal{A}^m\}_{k=1}^{2^{\bar m}}$.  The codebook is built in the way that the codewords, when the transmission is corrupted by the channel, are distinguishable from each other with high probability. This induces a natural way of defining when communication is reliable.
\begin{definition}{\bf[Reliability criteria]} A communicating protocol is $\epsilon$-reliable if the decoding error probability averaged over the codebook is bounded by $\epsilon$, i.e. when $\frac{1}{|\mathcal{C}|}\sum_{a_k\in\mathcal{C}}\sum_{a_j\in\mathcal{C}\setminus\{a_k\}}P(a_j|a_k)\leq \epsilon$ for $n$ large enough.
\end{definition}
We focus primarily to the BPSK case,  corresponding to $\mathcal{A}=\{0,\pi\}$, keeping in mind that the concept can be generalised to more complex constellations.  Each symbol transmission is done by performing the two-way protocol described in Fig.~\ref{Setup}. We define the $on$-setting, corresponding to the case when the communication is happening, and the ${\it off}$-setting, when no information is exchanged between Alice and Bob. In other words, we consider the $on$-setting when Alice and Bob applies the protocol with $N_S>0$, while the ${\it off}$-setting is when $N_S=0$. We consider a passive eavesdropper, able to catch all the modes that are lost in the Bob-Alice path, denoted with the $\leftarrow$ subscript, and Alice-Bob path, denoted with the $\rightarrow$ subscript.  In the following we will consider the worst-case scenario, where Eve gets all the lost photons in the environment, since the case of Eve with limited capabilities can be similarly derived. Using directional antennas would decrease the photons received by Eve in the worst-case scenario, improving the covert bit transmission rate. For a given slot, Eve gets the modes
\begin{align}
   \hat  w_{\leftarrow}^{(k)}&= -\sqrt{1-\eta}\; \hat a_{S}^{(k)}+ \sqrt{\eta}\; \hat h_\leftarrow^{(k)} \label{on1}\\
    \hat w_{\rightarrow}^{(k)}&= -\sqrt{1-\eta}\; e^{-i\tilde \varphi_k}\hat a'^{(k)}_S+\sqrt{\eta}\; \hat h_\rightarrow^{(k)}, \label{on2}
\end{align} 
for $k=1,\dots, M$. Here, $\{N_S>0,\,\tilde \varphi_k=\varphi_k+\phi\}$ defines the $on$-setting, while $\{N_S=0,\,\tilde \varphi_k=\varphi_k\}$ is the $\it off$-setting. The goal is to let the $on$ and the ${\it off}$ settings the least distinguishable possible. This is possible only in the $N_S\ll1$ limit, as in this case Eve's mode in both settings approximate each other.  This is due to the fact that both $\varphi_k+\phi$ and $\varphi_k$ are distributed uniformly at random in the alphabet $\mathcal{A}$. The inclusion of the random sequence of phase-shifts by Alice is a crucial requirement for the covertness proof, as otherwise Eve would have enough resources to uncover the communication by detecting the phase $\phi$ in a given slot, by let interfere the modes $\hat  w_{\leftarrow}^{(k)}$ and $\hat  w_{\rightarrow}^{(k)}$. However, she can still detect if communication is happening by detecting the changes in power of each path, and their correlations. As all the $\hat a_S^{(k)}$ are i.i.d., Eve's quantum state does not depend on $k$ and on which symbol is being transmitted.

\begin{lemma}{\bf[Covertness: achievability and converse bounds]}\label{theocovert}
Let $N_S>0$ be the average number of signal photons in the $on$-setting. Let the signal density matrix be $\rho_S=\sum_{j=0}^\infty N_S^j\sigma_j$, where 
\begin{align}
    \sigma_0&= |0\rangle\langle0|\\
    \sigma_1&= |1\rangle\langle1|-|0\rangle\langle0| \\
    \sigma_2&= c(|2\rangle\langle2|-2|1\rangle\langle1|+|0\rangle\langle0|),
\end{align}
with $0\leq c\leq1$. Then, the communication between Alice and Bob is $\delta$-covert over $n$ channel usages, provided that  $N_S\leq A_{\eta,N_B}\frac{\delta}{\sqrt{n}}$, with $A_{\eta,N_B}=\frac{4\sqrt{\eta^2N_B(1+\eta^2N_B)}}{(1-\eta^2)}$. In addition, no constant better than $A_{\eta,N_B}$ can be found. 
\end{lemma}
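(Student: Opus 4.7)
The natural starting point is Eve's Helstrom-optimal binary discrimination: with equal priors, $P^{(\mathrm{Eve})} = \tfrac{1}{2} - \tfrac{1}{4}\|\rho_E^{(\mathrm{on})} - \rho_E^{(\mathrm{off})}\|_1$, so $\delta$-covertness is equivalent to the trace-distance bound $\|\rho_E^{(\mathrm{on})} - \rho_E^{(\mathrm{off})}\|_1 \leq 4\delta$. The crucial role of Alice's truly random encoding phases is that $\tilde\varphi_k = \phi + \varphi_k$ is, marginally, uniformly distributed on $\mathcal{A}$ and independent across $k$; consequently Eve's full state factorizes as $\rho_E^{(\mathrm{on})} = (\rho_E)^{\otimes n}$, a tensor product of identical two-mode states over the $n$ channel uses. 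Combining additivity of the quantum relative entropy across copies with the quantum Pinsker inequality reduces the target to $\sqrt{2n\, D(\rho_E^{(\mathrm{on})}\|\rho_E^{(\mathrm{off})})} \leq 4\delta$, i.e.\ bounding the \emph{single-shot} relative entropy at leading order in $N_S$.

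Because the effective channel $\mathcal{N}$ from $\hat a_S$ to $(\hat w_\leftarrow,\hat w_\rightarrow)$ (two beam splitters, partial trace over $\hat a_R$, average over $\tilde\varphi_k$) is linear in the input state, the given expansion of $\rho_S$ propagates to
\begin{equation*}
\rho_E^{(\mathrm{on})} = \rho_E^{(\mathrm{off})} + N_S\, \Delta_1 + N_S^2\, c\, \Delta_2 + O(N_S^3),
\end{equation*}
with $\Delta_j \equiv \mathcal{N}(\sigma_j)$ traceless. The Taylor expansion of the quantum relative entropy around $\rho_E^{(\mathrm{off})}$ then yields $D(\rho_E^{(\mathrm{on})}\|\rho_E^{(\mathrm{off})}) = \tfrac{N_S^2}{2}\,\chi^2_{\mathrm{BKM}}(\Delta_1,\rho_E^{(\mathrm{off})}) + O(N_S^3)$, where $\chi^2_{\mathrm{BKM}}(X,\rho) = \int_0^\infty \mathrm{Tr}[X(\rho+sI)^{-1}X(\rho+sI)^{-1}]\,ds$ is the Bogoliubov-Kubo-Mori information. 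A small but important observation is that only $\Delta_1$ (hence only $\sigma_1$) enters at this leading order: the operator $\Delta_2$ contributes first through the $\Delta_1\Delta_2$ cross term at $O(N_S^3)$, which is precisely why the achievable bound is insensitive to $c$.

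The remaining, most technical, step is the closed-form evaluation of $\chi^2_{\mathrm{BKM}}(\Delta_1,\rho_E^{(\mathrm{off})})$. Although the parity-type average over $\tilde\varphi_k\in\{0,\pi\}$ makes $\rho_E^{(\mathrm{off})}$ non-Gaussian, one can block-diagonalize it via an orthogonal mode rotation on $(\hat w_\leftarrow,\hat w_\rightarrow)$ that isolates the contribution of $\hat h_\rightarrow$; the effective thermal noise scale $\eta^2 N_B$ emerges from this rotation as the background against which the signal must be discriminated. Computing the matrix elements of $\Delta_1 = \mathcal N(|1\rangle\langle1| - |0\rangle\langle0|)$ in the Fock basis of the rotated modes reduces the BKM integral to a geometric sum whose closed form, after matching constants against Pinsker, gives exactly $N_S \leq A_{\eta,N_B}\,\delta/\sqrt{n}$. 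For the converse (no better constant), I would invoke quantum local asymptotic normality: in the regime $N_S = O(1/\sqrt{n})$ the sequence of i.i.d.\ Eve models converges in the Le Cam sense to a classical Gaussian shift model whose Fisher information is exactly $\chi^2_{\mathrm{BKM}}$, so that the Pinsker step is asymptotically saturated and no coefficient larger than $A_{\eta,N_B}$ is admissible. The principal obstacle is the Fock-basis assembly of $\chi^2_{\mathrm{BKM}}$: since $\Delta_1$ is a two-mode, non-Gaussian operator produced by a chain of beam splitters, a phase average and a partial trace, its matrix elements must be tracked carefully enough to close the BKM integral and recover the precise prefactor $4\sqrt{\eta^2 N_B(1+\eta^2 N_B)}/(1-\eta^2)$.
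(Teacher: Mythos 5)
Your skeleton coincides with the paper's: Helstrom's formula for $P^{(\mathrm{Eve})}$, reduction of $\delta$-covertness to a trace-distance bound, Pinsker plus additivity of the relative entropy to reduce to the single-shot quantity $D(\rho^{(0)},\rho^{(N_S)})\leq 8\delta^2/n$, and the observation that only $\sigma_1$ survives at order $N_S^2$ (the paper's expansion likewise has $\sigma_2$ entering only at order $N_S^3$). However, the step that actually produces the constant $A_{\eta,N_B}$ is missing. You propose to evaluate $\chi^2_{\mathrm{BKM}}(\Delta_1,\rho_E^{(\mathrm{off})})$ directly in the Fock basis of the phase-averaged (hence non-Gaussian) two-mode state; this is precisely the hard part, and it is only described, not performed. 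The paper sidesteps it with a trick you do not use: by joint convexity of the relative entropy over the uniformly random phase $\tilde\varphi$, $D(\rho^{(0)},\rho^{(N_S)})\leq D(\rho^{(0)}_{\tilde\varphi=0},\rho^{(N_S)}_{\tilde\varphi=0})$, and each fixed-$\tilde\varphi$ state \emph{is} Gaussian, so the closed-form Gaussian relative entropy in terms of the $4\times4$ covariance matrices gives the coefficient $\frac{(1-\eta^2)^2}{2N_B\eta^2(1+N_B\eta^2)}$ explicitly. A second, smaller issue: the lemma asserts the exact constant, so one must control the sign of the $O(N_S^3)$ remainder (at $N_S=A_{\eta,N_B}\delta/\sqrt{n}$ the cubic term contributes $O(1/\sqrt{n})$ to $nD$, which if positive would violate $P^{(\mathrm{Eve})}\geq\frac12-\delta$ at finite $n$). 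The paper devotes most of the general-case argument to exactly this: proving the cubic coefficient $b_2\leq 0$ and bounding the $c$-dependent term by comparison with the $c=1$ Gaussian case via the Taylor remainder theorem.

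The converse is where your approach would genuinely fail. You argue that local asymptotic normality makes the Pinsker step asymptotically tight, with the limiting Gaussian shift model's Fisher information equal to $\chi^2_{\mathrm{BKM}}$. Pinsker's inequality is \emph{not} asymptotically saturated under local alternatives: the $n$-copy trace distance (and hence Eve's actual error probability) is governed by the SLD quantum Fisher information, which dominates the BKM metric, so a detectability-based converse would yield a different (and smaller) constant than the one obtained by inverting Pinsker. The paper's converse is a more modest, and correct, statement: it lower-bounds the \emph{relative entropy} itself, by applying a phase-controlled beamsplitter recombination of $\hat w_\leftarrow^{(k)},\hat w_\rightarrow^{(k)}$ and the data-processing inequality to reduce to the one-way geometry with transmissivity $\eta^2$, and then invokes the known one-way lower bound of Bullock et al.\ to show $D\geq\frac{(1-\eta^2)^2N_S^2}{2\eta^2N_B(1+\eta^2N_B)}+o(N_S^2)$, i.e.\ that the relative-entropy criterion cannot be made less restrictive. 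If you want a converse along your lines you must either adopt the paper's relative-entropy formulation of optimality or redo the analysis with the SLD metric, which changes the claim.
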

Here, we learn that the correlations between the two paths are not useful for Eve to detect the presence of the transmitted signal, provided that random modulations of the signal are applied at Alice. Notice that the constant $A_{\eta,N_B}$ is the same as in the one-way covert protocol with the change $\eta\rightarrow \eta^2$~\cite{Bash15}, which is due to the double path transmission.
Lemma~\ref{theocovert} can be directly applied to a TMSV state transmitter, which corresponds to $c=1$. It can be also applied to coherent state and SC state transmitters, if we allow Bob to perform random phase modulations. In fact, let $|\alpha_k\rangle$ with $\alpha = |\alpha|e^{-ik \pi /4}$ be a coherent state, then $\rho_S=\frac{1}{8}\sum_{k=0}^7|\alpha_k\rangle\langle \alpha_k|$ respects the conditions of Lemma~\ref{theocovert} with $c=1/2$. Bob's phase modulation at the transmission can be reversed at the receiver level due to the linearity of the communicating channel. We also notice that a Gaussian thermal state at Bob's side is not needed in order to ensure covertness, meaning that complex Gaussian modulations of Bob's signal are not needed. We can rely instead on discrete phase modulations, which are experimentally easier to generate and they require less memory complexity.

We have provided an upper bound on the average transmitting power $N_S$, which needs to scale as $O(\sqrt{n})^{-1}$ in order to keep the communication covert over $n$ channel usages. Typical transmitter operates at constant photon number per mode, and the requirement of $N_S$ decaying with the inverse of $\sqrt{n}$ can be quite restrictive. This constraint can be relaxed by defining a {\it probabilistic} version of the protocol, which makes use only of a fraction $\tau\leq A_{\eta,N_B}\frac{\delta}{N_S\sqrt{n}}$ of the $n$ available modes in the $on$-setting~\cite{Bullock2020}, see Fig.~\ref{covfig}. In each of these modes, the transmitting power $N_S$ is kept constant and small.   
\begin{figure}[t!]
\includegraphics[width=0.38\textwidth]{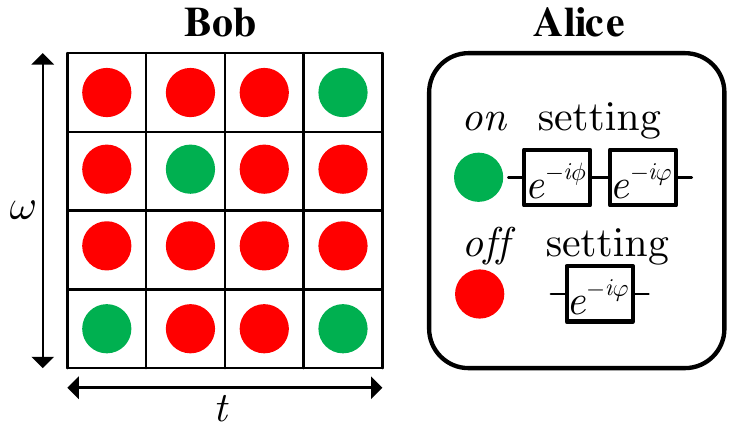}
\caption{{\bf Probabilistic version of the two-way covert quantum communication protocol.} Alice and Bob secretly choose to communicate only in the fraction of the temporal and frequency modes at their disposal. In the $on$-setting (green dots), Alice performs a phase modulation $e^{-i(\phi+\varphi)}$, where $\phi\in\mathcal{A}$ is the symbol that Alice wants to transmit and $\varphi$ is taken uniformly at random in $\mathcal{A}$. The phase $\varphi$ is secretly known to both Alice and Bob. In the ${\it off}$-setting (red dots), Alice performs a phase modulation $e^{-i\varphi}$, where $\varphi$ is taken uniformly at random in $\mathcal{A}$ Here, $\phi$ is generated in situ and it is not shared by Alice.}
\label{covfig}
\end{figure} 

Lemma~\ref{theocovert}, together with a random-coding argument, implies that the square-root-law is achievable by our two-way setup.

\begin{theo}{\bf [Square-root law: two-way setup]}\label{SRL}
Let Alice and Bob share a publicly available codebook and a secret random sequence of length $n$ [$O(\sqrt{n}\log n)$ in the probabilistic version]. Then, they can communicate $\bar m=\beta_{det} \beta_{cov}\delta \sqrt{n}+\log_2\epsilon$ bits over $n$ channel usages by keeping the system $\delta$-covert and the communication $\epsilon$-reliable. Here, $\beta_{cov}=\frac{8}{\pi\log 2}c_B 
\eta^4$ provided that the transmitted signal satisfies the assumptions of Lemma~1, and $\beta_{det}$ is a constant that depends on the detector:  $\beta_{det}\leq4$ ($\beta_{det}\leq2$) for the TMSV state and SC state transmitters with a collective (local) receiver, and $\beta_{det}=1$ for the coherent state transmitter with a homodyne receiver. 
\end{theo}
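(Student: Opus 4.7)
The plan is to combine the covertness constraint from Lemma~\ref{theocovert} with a random-coding reliability argument driven by the per-slot error rates of Theorems~\ref{QCB}--\ref{QFIbound}. First I would saturate covertness by choosing $N_S=A_{\eta,N_B}\delta/\sqrt n$, and partition the $n$ modes into $m$ slots of $M$ modes each with both $M\to\infty$ and $\beta_\eta M\to 0$ (e.g.\ $M=n^{1/3}$). Theorems~\ref{QCB} and~\ref{QFIbound} then read off the per-mode Chernoff exponent as $\beta_\eta\simeq\eta^2\beta_{det} N_S/(1+N_B)$, with $\beta_{det}\in\{4,2,1\}$ for the TMSV/SC+collective, TMSV/SC+local, and coherent+homodyne configurations respectively.

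Next I would sharpen the Chernoff bound $p_{\rm slot}\leq\frac{1}{2}e^{-\beta_\eta M}$ to the Gaussian asymptotic $p_{\rm slot}\simeq Q(\sqrt{2\beta_\eta M})$ via a quantum central-limit argument on the log-likelihood ratio of the optimal (collective or local) measurement. This is exact for the coherent+homodyne case, where the per-slot channel is classical BPSK-AWGN, and it follows in the TMSV/SC cases from the fact that in the $N_S\ll 1$ regime the received signal-idler state is well approximated by a displaced thermal state whose amplitude encodes $\phi$, so that Kennedy/homodyne-type receivers saturating Theorems~\ref{QCB}--\ref{QFIbound} produce Gaussian likelihoods. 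Using $Q(y)\simeq\frac{1}{2}-y/\sqrt{2\pi}$ for $y\ll 1$ one finds $\frac{1}{2}-p_{\rm slot}\simeq\sqrt{\beta_\eta M/\pi}$, which is the origin of the $1/\pi$ factor inside $\beta_{cov}$.

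The third step is a random-coding achievability argument on the BSC induced per slot. Alice and Bob share a uniform random codebook $\mathcal C\subset\{0,\pi\}^m$ of size $2^{\bar m}$; after the pre-shared one-time-pad $\varphi_k$'s are stripped at Bob, the induced channel is exactly BSC$(p_{\rm slot})$ with capacity expansion $I=1-H_2(p_{\rm slot})\simeq 2(\frac{1}{2}-p_{\rm slot})^2/\log 2\simeq 2\beta_\eta M/(\pi\log 2)$. A Feinstein-type one-shot coding bound then gives
\begin{equation*}
\bar m \leq m I+\log_2\epsilon = \frac{2n\beta_\eta}{\pi\log 2}+\log_2\epsilon,
\end{equation*}
independent of $M$ since $mM=n$. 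Substituting $N_S=A_{\eta,N_B}\delta/\sqrt n$ and using the bright-noise simplification $\eta^2 A_{\eta,N_B}/(1+N_B)\simeq 4\eta^4 c_B$ (valid for $\eta\ll 1$ and $\eta^2 N_B\gg 1$) reproduces $\beta_{cov}=8c_B\eta^4/(\pi\log 2)$ and the stated inequality. The probabilistic version follows by letting Alice transmit only on a random fraction $\tau=O(1/\sqrt n)$ of the modes, with a $O(\sqrt n\log n)$-bit shared key specifying which modes are active.

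The hardest step will be the $Q(\sqrt{2\beta_\eta M})$ per-slot asymptotic for the quantum-correlated transmitters, since Theorems~\ref{QCB} and~\ref{QFIbound} provide only the leading Chernoff exponent without the $Q$-function prefactor. For TMSV this can be handled by reducing the binary problem to that of two displaced thermal states after projecting onto the idler, where a classical Gaussian CLT applies directly; for SC states the same reduction holds in the regime $N_S\ll 1$ where SC behaves Gaussian-like. A secondary but routine issue is justifying the $+\log_2\epsilon$ additive correction uniformly in $M,m$, which requires a Gallager/Feinstein random-coding exponent estimate rather than bare Fano, and checking that the admissible window $M\to\infty$ with $M=o(\sqrt n)$ is nonempty.
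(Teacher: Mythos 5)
Your proposal is correct and follows the same skeleton as the paper's Appendix~B proof: saturate the covertness constraint of Lemma~\ref{theocovert} with $N_S=O(\delta/\sqrt{n})$, reduce each $M$-mode slot to a classical symmetric channel whose quality is set by the exponents $\beta_{det}\in\{4,2,1\}$ from Theorems~\ref{QCB} and~\ref{QFIbound}, and close with a random-coding bound giving $\log_2 P\lesssim \bar m-\frac{2\beta_{det}\eta^2 nN_S}{\pi(1+N_B)\log 2}$, followed by derandomization to a specific codebook. The two routes differ only in the middle step. The paper maps each slot to an induced AWGN channel of variance $\sigma_\beta^2=\frac{1+N_B}{2\beta\eta^2 M}$ and imports the bound $P\leq 2^{\bar m-mN_S/(\pi\log(2)\sigma_\beta^2)+O(1)}$ directly from the Bash--Goeckel--Towsley AWGN square-root law (their Theorem~1.2), instead of hard-thresholding to a BSC and expanding $1-H_2(p)$ with a Feinstein bound; the two computations agree to leading order, which is where your $1/\pi$ and theirs coincide. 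More importantly, the paper resolves what you correctly flag as the hardest step — Gaussianizing the collective receiver — by a simple device you do not quite land on: it splits each slot into $K\gg1$ sub-blocks of $M/K\gg1$ modes, applies the optimal collective measurement per sub-block (so the exponent $\beta_{det}=4$ is attained sub-block-wise), and then invokes the ordinary classical CLT on the $K$ i.i.d.\ sub-block outcomes to obtain the AWGN statistic. This avoids both a quantum CLT for the log-likelihood ratio and the displaced-thermal-state reduction you propose, and is also compatible with your admissibility window ($M\to\infty$ with $MN_S=o(1)$, which the paper states explicitly as the validity condition of the coding bound).
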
 

This result comes from a compromise between the transmission rate needed to keep the communication covert, quantified as $N_S=O(\sqrt{n})^{-1}$ in Lemma~\ref{theocovert}, and the ability of performing error correction on a random code, whose error probability decreases exponentially with the total transmitted power $nN_S=O(\sqrt{n})$. In this way, the error probability $P$ in transmitting $\bar m$ bits is $\log P\sim \bar m-O(\sqrt{n})$, which implies the square-root scaling. In Theorem~\ref{SRL}, all the constants have been explicitly derived for different transmitters and receivers for the two-way protocol.

\subsection{B. Key expansion and synchronization}
Summarizing, Alice and Bob need to agree secretly on the following information prior the communication for each $n$ channels in the {\it on}-setting: (i) A secret random sequence corresponding to the random phase-shifts by Alice. 
This information requires $O(n)$ bits of pre-shared knowledge, or $O(\sqrt{n})$ in the probabilistic version. (ii) In the probabilistic version, the information needed to specify the modes which are used in the $on$-setting. This requires $O(\sqrt{n}\log n)$ bits of shared secret. Moreover, Alice and Bob need to agree on which temporal frame to turn the communication on, which requires additional $O(N)$, where $N=\frac{\Delta T\Delta \Omega}{n}$ is the number of temporal frames in a time interval $\Delta T$ assuming an operational bandwidth $\Delta \Omega$. Assuming $N=C(n)$, with $C(n)>n$, eventually we will need a key of at least $\omega\left(\frac{C(n)}{\sqrt{n}}\right)$ bits of pre-shared secret to transmit $O\left(\frac{C(n)}{\sqrt{n}}\right)$ covert bits. It is then clear that there is an overhead of number of pre-shared bits with respect to the transmitted ones, if we want to ensure covertness. Therefore, Alice and Bob need to meet regularly in order to agree on a key. These meetings can be implemented via public communication through an authenticated channel. In this case, the key generation and expansion steps can be done covertly and secretly with a protocol based on likelihood encoder techniques, assuming that the public channel is classical~\cite{Bloch19, Song}. Once Alice and Bob share the key, we have shown that using the Vernam cypher allows for covert and unconditionally secure communication. 
If instead the parties are not allowed to publicly communicate at any point, but they still do pre-share a secret key, one can expand the key using a pseudo-random generating function. Indeed, if Alice and Bob pre-share a pseudo-random generating function $f:\{0,1\}^l\rightarrow \{0,1\}^{p(l)}$, where $l\in\mathbb{N}$ and $l=o(p(l))$ for $l\gg1$, then it is possible to communicate more bits than the pre-shared ones.  Widely used Advanced Encryption Standard and the Secure Hashing Algorithm have outputs that are exponentially larger than their seeds while still retaining computational indistinguishability from true randomness~\cite{Arrazola18}. The security of the protocol defined by a key expansion step based on pseudo-random functions followed by a one-time-pad encoding is not unconditionally secure. Its security relies on the computational assumption that hacking the pseudo-random function is hard in a way. However, covertness adds a new layer of security, since the transmission is also protected from the limited Eve's detection capability and the bandwidth spreading. Indeed, let us consider the probabilistic version of the protocol, where only a small fraction $\tau$ of the modes is used to communicate. Since Alice applies to the remaining modes truly random phase modulations generated in situ, the randomness of the transmitted string can be controlled with $\tau$. One may think a similar protocol in the one-way case, by generating meaningless signals in the {\it off}-setting. However, generally the covertness criteria is whether a signal is being transmitted or
not rather than if Alice and Bob are transmitting anything meaningful or not. This is because there
are relevant scenario when the transmitter must indeed be switched off~\cite{Wang2016}.  Our two-way protocol, instead, uses passive operations to embed the information, which allows to generate meaningless messages when the transmitter is switched off.

To establish covert communication, Alice and Bob require a pre-established secret, in
this case the knowledge of the phases $\varphi_k$ in the $on$-setting and the instances when communication is happening. This in turn requires them to have very accurate
synchronization so that time tags of signals are properly assigned. The simplest solution for synchronization is that Alice and Bob are both synchronized to the same clock. Sub nano-second level clock accuracy is possible using IEEE 1588 high accuracy default profile~\cite{Girela2020}, but that requires physical cabling between the master clock and the communicating party. High accuracy wireless synchronization is possible using General Global Navigate Satellite System (GNSS)~\cite{Weinbach2011}. Since our scheme requires continuous synchronization between Alice and Bob in both the cases of communication switched on and off, this step does not need to be implemented covertly.

\section{VI. circuit QED Implementation}\label{sec:V}

The ideas introduced in this paper can be easily implemented using coherent states properly modulated, as described in the previous section, and heterodyne measurement. While this scheme does not achieve any of the ultimate bounds for the receiver, it is the most practical way of realizing the protocol. If one is allowed to implement certain entangling operations at the transmitting and receiver level, then larger key rates are achievable. In this context, optimal schemes for the Gaussian state receiver have been thoroughly studied in the literature. Instead, a receiver for the SC state transmitter is still missing. In this section, we fill this gap by introducing an implementation in a circuit QED setup of a transmitter and a receiver based on SC states. We show that Jaynes-Cumming (JC) operations and qubit measurements are enough to fully implement the protocol. This comes with a great advantage with respect to the Gaussian state receiver, which requires photo-detectors. We also provide an analysis of how the decoherence affects the protocol based on quantum correlations. The discussion will be mostly at the model level. However, it is noteworthy to observe that all the operations described in the following have been proved in cQED since fifteen years, with increasing enhancements of fidelity for the gate implementation and state storage.

\subsection{A. State preparation}

The SC state defined in Eq.~\eqref{SCS} can be prepared in a circuit QED setup as described in the following. Consider the JC Hamiltonian
\begin{equation}\label{JCHam}
\hat H=\hat H_0^{(\omega_r,\omega_q)} + \hat H_{JC}^g,
\end{equation}
where $\hat H_0^{(\omega_r,\omega_q)}=\hbar \omega_r \hat a^\dag \hat a+\frac{\hbar\omega_q}{2}\hat \sigma_z$, with $\hat \sigma_z=|e\rangle\langle e|-|g\rangle \langle g|$, and $\hat H_{JC}^g=\hbar g (\hat \sigma^+ \hat a +\hat \sigma^- \hat a^\dag)$.
Here, $\omega_r$ and $\omega_q$ are the frequency of the resonator and the qubit respectively, and $g$ is the coupling between these two systems. 
We also define the  detuning  $\Delta = \omega_r-\omega_q$ and 
$\Gamma=\max\{\kappa, 1/T_1, 1/T_2\}$, where $\kappa$ is the cavity decay rate, and $T_1$ and $T_2$ are the qubit decaying and dephasing times respectively (see Appendix~\hyperref[suppl:IV]{D}). In the dispersive regime, where $\Delta\gg g$, one can apply perturbation theory to the first order of the parameter $g/\Delta$, finding the effective Hamiltonian
\begin{equation}
    \hat H_{SDR}= \hat H_0^{(\omega_r,\omega_q+\chi)} + \hbar \chi \hat \sigma_z \hat a^\dag \hat a,
    \end{equation}
where $\chi=g^2/\Delta$~\cite{Gambetta}. We assume that $\chi\gg \Gamma$, which is known as the {\it strong-dispersive regime} (SDR). In this way, any losses of the bosonic mode and the qubit are negligible during the implementation of the gate, as long as the operating time will be sufficiently short. The preparation protocol is based on the fact that the Hamiltonian $\hat H_{SDR}$ is a conditional phase-shift on the resonator, with the qubit acting as the control. To put this easier in evidence, we will work in a rotating frame defined by the free Hamiltonian $\hat H_0^{(\omega_r,\omega_q+\chi)}$. The Hamiltonian in the rotating frame is $\chi \sigma_z\hat a^\dag \hat a$. The preparation protocol consists in the following steps~\cite{Vlastakis}, assuming an initial qubit-resonator state $|g\rangle|0\rangle$ (see Fig.~\ref{draw5}).
\begin{description}
\item[{\bf Step 1}] Apply a $\pi/2$ $\hat \sigma_y$-pulse to the qubit in the ground state, and drive the resonator at frequency $\omega_r$ with a signal calibrated such that the coherent state $|-i\alpha\rangle$ is prepared.

\item[{\bf Step 2}] Let the qubit and the resonator interact for a time $t_\chi=\frac{\pi}{2\chi}$. This results in a conditional phase shift on the cavity state by the operator $|g\rangle \langle g|\otimes \exp(i \pi \hat a^{\dag} \hat a /2) + |e\rangle \langle e|\otimes \exp(-i \pi \hat a^{\dag} \hat a /2) $, Its action on the state prepared in Step 1 can be understood as a uniform counterclockwise rotation by an angle $\pi /2$ of the coherent state, followed by the application of the photon parity operator $\exp(-i \pi \hat a^{\dag} \hat a )$ if the qubit is excited. The state after this step is $\frac{1}{\sqrt{2}}\left[{|g\rangle|\alpha\rangle + |e\rangle| -\alpha\rangle}\right]$.

\item[{\bf Step 3}] Apply a $\pi/2$ $\hat \sigma_y$-pulse to the qubit. The state after this step is $\frac{1}{\sqrt{2}}\left[{|+\rangle|\alpha\rangle + |-\rangle| -\alpha\rangle}\right]$.
\end{description}

In order to implement Step 1 and Step 3 we would need to decouple the qubit and the resonator: this can be achieved either by a tunable coupler or by further detuning the qubit. Also we have considered here the ideal situation when all the operations can be realized with high fidelity, which is a good approximation in the strong regime~\cite{Wang16}. The main remaining source of errors is due to the spurious thermal contribution present in the cryogenic environment prior to the preparation stage. This will be the object of a later on discussion.

\begin{figure}[t!]
\includegraphics[width=0.30\textwidth]{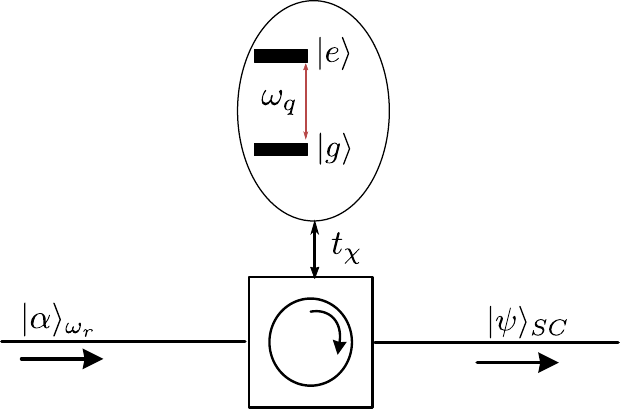}
\caption{Scheme for the preparation of the Schr\"odinger's cat state. A resonator with central frequency $\omega_r$ is driven with a coherent signal, displacing the state of the cavity to $|-i\alpha\rangle$. A transmon qubit with frequency $\omega_q$ is initialized in a state $|+\rangle=\frac{|e\rangle+|g\rangle}{\sqrt{2}}$. A conditional phase shift is then applied. This is implemented by letting the resonator and qubit interact in the strong dispersive regime for a time $t_\chi=\pi/2\chi$, where $\chi=g^2/\Delta$ is the effective coupling. Here, $g$ is the coupling strength of the qubit-resonator system, $\Delta=\omega_r-\omega_q$ and $g/\Delta\ll1$. Finally, a $\pi/2$ $\hat \sigma_y$-pulse is applied to the qubit. Feasible parameters are $\omega_r=5$~GHz, $\omega_q=\omega_r+\Delta$ with $\Delta=20$~Mhz, and $g=100$~Khz.}
\label{draw5}
\end{figure} 

\subsection{B. Receiver for the entanglement-assisted protocol}

For the implementation of the optimal observable $\hat O_{opt}$ we will make use of the JC Hamiltonian defined in Eq.~\eqref{JCHam} in the {\it strong-resonant} regime, i.e. when $\omega_q=\omega_r$ and $g\gg \Gamma$. The qubit-resonator system evolution under a time $t_g=\tau/ g$ corresponds to applying the gate $\hat U_\tau=e^{-\tau [\hat a_R^\dag\hat \sigma^- - \hat a_R \hat \sigma^+]}$ up to a known phase shift $e^{-i \hat H_0t_g/\hbar}$, as $[\hat H_0^{(\omega_q,\omega_q)},\hat H_{JC}^g]=0$. The observable $\hat O_{opt}$ can be implemented in an approximately in the following way, see Fig.~\ref{draw6}.

\begin{description}
    \item[{\bf Step 1}] Perform a squeezing operation $\hat S(r)$ on the reflected mode $\hat a_R$, with squeezing parameter $r=-\arcsinh \lambda_-$~\cite{footnote}. This generates the mode $\hat a'_R=\lambda_+ \hat a_R +\lambda_-\hat a_R^\dag$.
    \item[{\bf Step 2}] Apply a $\pi/2$ $\hat \sigma_x$-pulse to the qubit state. This switches $\hat \sigma^-$ with $
    \hat \sigma^+$.
    \item[{\bf Step 3}] Let the qubit-signal system interact with the JC Hamiltonian in the strong-resonant regime for a time $t_g=\tau/g$, with a small enough $\tau$. This generates the transformation
    \begin{align}\label{transf}
      \hat V^\dag |e\rangle\langle e| \hat V= |g\rangle\langle g| + \tau \hat O_{opt}+o(\tau),
    \end{align}
   where $\hat V= \hat U_\tau \hat S(r)\hat \sigma_x$.\\ 
    \item[{\bf Step 4}] Measure the qubit in the basis $\{|g\rangle\langle g|, |e\rangle\langle e| \}$. 
\end{description}

If  $\tau$ is low enough, this protocol approximates the measurement in the low-energy eigenspace of $\hat O_{opt}=\hat \sigma^-[\lambda_+ \hat a_R +\lambda_-\hat a_R^\dag] +c.c.$, which, in $N_S\ll1$ regime, is the relevant part of the Hilbert space. Let us define $\hat O_\tau=\hat V|e\rangle \langle e| \hat V^\dag$. The threshold discrimination protocol consists in repeating the steps 1-4 $M$ times, collecting the results $\{o_i\}_{i=1}^M$. Here, $o_i=1$ (or $0$) is the measurement outcome corresponding to the projection on the state $|g\rangle$ (or $|e\rangle$). We then calculate the relative frequency $\frac{1}{M}\sum_{i=1}^M o_i$, which corresponds to the expected value of the observable $\hat O_\tau$ on the [qubit]-[reflected signal] system state. We use the result to discriminate between the two hypothesis: $\langle \hat O_\tau \rangle_{\phi=0}=\lambda_++ \tau \eta \sqrt{N_S}(1+e^{-4N_S})+o(\tau)$ and $\langle \hat O_\tau\rangle_{\phi=\pi}= \lambda_+- \tau \eta \sqrt{N_S}(1+e^{-4N_S})+o(\tau)$.
We choose the $\tau$ value in order to maximize the SNR $Q_{\hat O_\tau}$  for the observable $\hat O_\tau$, defined as
\begin{align}
    Q_{\hat O_\tau} \equiv \frac{(\langle \hat O_\tau\rangle_{\rho_{\eta,\pi}}-\langle\hat O_\tau\rangle_{\rho_{\eta,0}})^2}{\Delta \hat O_\tau^2},
\end{align}
where $\Delta  \hat O_\tau^2  = \frac{1}{4}\left[\sqrt{\Delta  \hat O_{\tau,\phi=\pi}^2} + \sqrt{\Delta  \hat O_{\tau,\phi=0}^2} \right]^2$ is the variance of the observable $\hat O_\tau$ averaged over the states $\rho_{\eta,\phi=\pi}$ and $\rho_{\eta,\phi=0}$. The SNR is related to the EP of a threshold discrimination strategy with $p_{\rm err}\sim \exp{\left[-\frac{Q_{\hat O_\tau}M}{8}\right]}$ for $M\gg1$. In Appendix~\hyperref[suppl:III]{C}, we show that any value $N_S/N_B\ll \tau^2 \ll1/N_B$ is good for approximating the optimal SNR in the $N_B\gg1$, $N_S\ll1$ regime. For, instance, if we choose  $\tau^{2}=\frac{N_S}{\sqrt{N_B}}\equiv\tau^{*\,2}$, with $N_S\ll \frac{1}{\sqrt{N_B}}$, we obtain 
\begin{align}
\frac{Q_{\hat O_{\tau^*}}}{Q_{\hat O_{opt}}} \simeq 1-\frac{1}{\sqrt{N_B}}.
\end{align}

Interfacing a signal with $N_B\sim10^3$ number of photons with a low-temperature environment is challenging. An initial attenuation is needed, making the protocol less efficient in terms of the SNR. An attenuation can be modeled with the beamsplitter input-output relations $\hat a_{R,att}= \sqrt{\eta_{att}}\, \hat a_R+\sqrt{1-\eta_{att}}\, \hat v$, where $\eta_{att}$ is a power attenuator and $v$ is a bosonic mode assumed to be in a vacuum state. This can be achieved with cryogenic microwave attenuators. The measurement protocol is applied to the mode $\hat a_{R,att}$, resulting in a rescaled SNR: $Q^{att}_{O_{\tau^*}}/ Q_{O_{\tau^*}}\simeq \frac{\eta_{att}N_B}{1+\eta_{att}N_B}$ in the $N_S\ll1$ limit, This means that the performance is not affected as long as $\eta_{att}N_B$ is kept large enough.

\begin{figure}[t!]
\includegraphics[width=0.32\textwidth]{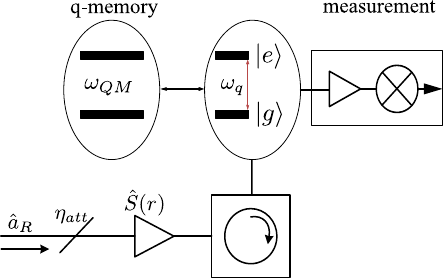}
\caption{Scheme for the implementation of the observable optimizing the quantum Fisher information. During the signal transmission, which happens at frequency $\omega_r$, the qubit state is transferred to a quantum memory. The qubit frequency is then tuned to $\omega_q=\omega_r$. Here the quantum memory is a resonator at frequency $\omega_{QM}$,  interacting dispersively with the qubit. This allows the implementation of the cat code. The state is transferred back to the qubit for the measurement stage. The received signal $\hat a_R$ is attenuated. A squeezing operation is then applied using a Josephson parametric amplifier in the degenerate mode. The output interacts with the qubit in the resonant regime. Finally, the qubit is measured in the $\{|g\rangle\langle g|,|e\rangle\langle e|\}$ basis.}
\label{draw6}
\end{figure} 

\subsection{C. Effects of decoherence on the performance}

In the state preparation scheme, the main source of inefficiency is given by the spurious thermal contribution present in the cryogenic environment prior to the preparation stage. In addition, while any sort of signal dissipation after the state preparation is already included in the quantum illumination setup in an effective way, the idler decoherence must be characterized and bounded in order to understand the actual performance of the protocol in practical scenarios. Let us first discuss how the protocol is affected by the initial thermal noise. We assume a Markovian environment at temperature $T$, whose Lindblad master equation is  $\partial_t \rho=[\mathcal{L}_D^q+\mathcal{L}_D^r] \rho$. Here,
\begin{align}\label{losqub}
    \mathcal{L}_D^q/\hbar = \frac{\gamma}{2} \mathcal{D}[\hat \sigma_z]+\Gamma_{\uparrow}\mathcal{D}[\hat \sigma^+]+\Gamma_{\downarrow}\mathcal{D}[\hat \sigma^-],
\end{align}
models the qubit decoherence, and 
\begin{align}
    \mathcal{L}_D^r/\hbar = \kappa(1+N_T)\mathcal{D}[\hat a]+\kappa N_T\mathcal{D}[\hat a^\dag],
\end{align}
with $N_T=(e^{\beta \hbar \omega_r}-1)^{-1}$ and $\beta=(k_BT)^{-1}$, is the resonator dissipation in an environment at temperature $T$. Here, the Lindblad operators act on a general qubit-resonator state $\rho$ as $\mathcal{D}[\hat L]\rho=\hat L\rho \hat L^\dag -\frac{1}{2} \{\hat L^\dag\hat L ,\rho\}$. In addition, the relations $a_g\equiv \frac{\Gamma_{\downarrow}}{\Gamma_{\downarrow}+\Gamma_{\uparrow}}=(1+e^{-\beta \hbar \omega_q})^{-1}$ and $a_e\equiv \frac{\Gamma_{\uparrow}}{\Gamma_{\downarrow}+\Gamma_{\uparrow}}=e^{-\beta \hbar \omega_q}$ hold for a qubit in an environment at temperature $T$. In a $T\simeq 20$~mK environment we have that $\beta\hbar\omega_{r,q}\gg1$ for $\omega_{r,q}\sim1-10$~GHz, therefore decoherence and dissipations in principle should not play a role in the performance evaluation. However, small thermal contributions can be relevant in the low-photons regime, and their effects on the SNR need to be quantified. We assume the initial qubit-resonator state to be the steady state of the Lindblad master equation, i.e. $\rho_q\otimes \rho_r$ with $\rho_q=a_g|g\rangle\langle g|+a_e|e\rangle\langle e|$ and $\rho_r=\frac{1}{1+N_T}\sum_{n=0}^\infty \left(\frac{N_T}{1+N_T}\right)^n|n\rangle\langle n|$. By applying the state preparation protocol, we obtain the state $\rho_{\mathrm{noisy}}= \frac{1}{2}\sum_{k,k'\in\{+,-\}}[a_g+kk'a_e]|k\rangle\langle k'| \otimes D(k\alpha)\rho_rD(k'\alpha)$, where $D(\beta)$ is a displacement operator.  This implies a rescaling of the optimal SNR for fixed transmitting power, given by $Q_{\hat O_{opt}}^{\mathrm{noisy}}/Q_{\hat O_{opt}}=\frac{(a_g-a_e)^2}{1+c^{-1}}$, where we have set $|\alpha|^2=cN_T$. We notice that for $c\leq [2(a_g-a_e)^2-1]^{-1}$ we cannot have any quantum advantage. This sets an upper limit to the amount of thermal noise tolerable before losing all the advantage with respect to a coherent-state transmitter. The initial thermal contribution can be experimentally characterized in several ways. For instance, recently a primary thermometry for propagating microwaves with sensitivity of $4\times10^{-4}$
photons/$\sqrt{{\rm Hz}}$ and a bandwidth of $40$~MHz has been developed~\cite{Simone}.  A similar analysis can be performed for the TMSV state case, obtaining comparable results.

The main source of losses appears in the traveling phase of the protocol. Here, the idler must be preserved coherently in order to profit from the initial quantum correlations. In fact, it is easy to see that under the lossy dynamics described in Eq.~\eqref{losqub}, we have that
\begin{align}\label{noisySNR}
  \frac{Q^{dec}_{\hat O_{opt}}}{Q_{\hat O_{opt}}}=e^{-2\frac{t}{T_2}},
\end{align}
where $Q^{dec}_{\hat O_{opt}}$ is the SNR for the protocol applied to the state $\rho^{dec}= e^{t\mathcal{L}_D^q/\hbar}[|\psi\rangle_{\rm SC}\langle \psi|]$, $T_2^{-1}=\gamma + \frac{\Gamma_{\uparrow}+\Gamma_{\downarrow}}{2}$, $t$ is the traveling time, and we have discarded any initial thermal contributions (see Appendix~\hyperref[suppl:IV]{D}). This means that the protocol must be performed in a time well below the dephasing time of the qubit. Nowadays, qubits with $100$~$\mu$s lifetime can be realized~\cite{Houck}, corresponding to 30 km freespace propagation of light. An alternative would be the storage in high-Q Nb resonators. Presently, internal quality factors can reach above 1 milion~\cite{Cleland}, which at $5$~GHz frequencies it corresponds to a decay time of $2/5$~ms. Other options include highly coherent two-level systems formed in the junctions of qubits and high-frequency piezo-mechanical modes. 

Digital methods based on error correction have been widely studied in the context of quantum computing. There are principally two approaches to tackle the decoherence problem with a digital approach.  We may encode the idler into a logical qubit since the beginning, and perform the protocol in the logical Hilbert space. This is always doable in principle, and one may make a statement that an efficient cQED error-correction code implementation will be soon reached in the context of quantum computing~\cite{Google}. However, this approach is generally costy, as it requires the simultaneous control of several qubits. An alternative consists in exploiting the possibility of transfer the qubit information to the infinite degrees of freedom of a bosonic resonator field via a Jaynes-Cumming interaction~\cite{Qmem}. This approach requires only one resonator to store the idler, making the syndrome detection and error correction tasks easier to realize, because dissipation would be the main source of noise at $T\simeq 20$~mK. These so-called cat codes are at the basis of one of the most promising quantum computing architectures, and they have been experimentally demonstrated. Theoretically, one may reach substantial fidelity improvements over the uncorrected protocol for given time, with millisecond lifetime instead of hundred of microseconds of a bare transmon qubit~\cite{Qmem}. In principle, this approach should be better than using the Fock states of a resonator as a qubit. However, further experimental research is needed in this context, as the lifetime of the cat-qubit implemented in a recent experiment has been only 1.1 larger than an uncorrected qubit encoded in the Fock basis of a resonator~\cite{Ofek16}.

\section{VII. Conclusion}

We have developed the theory for performing a two-way covert quantum communication protocol in the low-frequency regime, in the case where one party has a severe energy constraint. While the results of this article are quite general, we have focused mainly on the $1-10$~GHz spectrum, where cQED platforms have been highly developed in the late decades. We have proved the ultimate bounds for the optimal receivers, finding that a quantum correlated detector can be at most a factor of four better in terms of SNR. Our bounds can be directly applied to the performance of a quantum illumination protocol. They imply that if the source is strongly amplified, then a protocol based on coherent state is optimal, ruling out any quantum advantage in a recent experiment~\cite{Shabir}. We have used the quantum illumination paradigm as a tool to perform two-way communication in the scenario where the sender is constrained to passive operations. Indeed, we have proved the square-root law for covert communication in our two-way setup, showing that $O(\sqrt{n})$ bits can be covertly transmitted by using the channel $n$ times. On the practical side, covertness is still limited by the low transmission rates. However, there are many machine type communications applications such as metering of electric power, gas and water that generate very little traffic per day and can tolerate large delays~\cite{Raza}. That is, their required data rates can be as low as millibits per second. These systems often transmit privacy sensitive data. Even if post-quantum encryption is utilized, the adversary could still learn privacy jeopardizing information simply by observing the traffic pattern~\cite{Hafeez}. Finally, we have provided the ingredients for performing a cQED based experiment, using Schr\"odinger's cat states as resource. Our implementation concept relies on qubit measurements instead of photodetection, notably improving the experimental requirements for the entanglement-assisted protocol. Developing a microwave quantum communication theory is a challenging task, due to the amount of noise that the related systems exhibit at room temperature. Indeed, our results contribute towards an implementation of open-air microwave quantum communication. In particular, we have settled, for the first time, a rigorous ground for developing a quantum-enhanced version of backscatter communication. This paradigm is now gaining plenty of interest in the communication engineering community due its ability of communicating with low-energy devices, with applications in RF communications, Internet-of-Things and NFC based technology~\cite{Griffin}. Indeed, our theoretical treatment gives a new twist to the field of backscatter communication, and we believe it will inspired further research on this line.

\section*{Acknowledgements}

The authors acknowledge support from Academy of
Finland under project no. 319578, under the RADDESS programme project no. 328193, and under the “Finnish Center of
Excellence in Quantum Technology QTF” Project Nos. 312296, 336810. RD acknowledges support from the Marie Sk{\l}odowska Curie
fellowship number 891517 (MSC-IF Green-MIQUEC). GSP acknowledges funding received from the European Union's Horizon 2020 research and innovation programme under grant agreement no. 862644 (FET-Open project QUARTET). We are also grateful for the support of the Scientiﬁc Advisory Board for Defence (Finland) and Saab.

 The authors thank Sergey N. Filippov, Giuseppe Vitagliano, G\"oran Johansson, Stefano Pirandola, and Kirill G. Fedorov for their useful comments to the manuscript in its various stages.

\clearpage
\widetext
\begin{center}
\textbf{\large Appendix: Two-way covert microwave quantum communication}

\end{center}

\setcounter{equation}{0}
\setcounter{figure}{0}
\setcounter{table}{0}
\setcounter{page}{1}
\setcounter{lemma}{0}

\makeatletter

\renewcommand{\theequation}{A\arabic{equation}}
\renewcommand{\thelemma}{A\arabic{lemma}}
\renewcommand{\thefigure}{A\arabic{figure}}
\renewcommand{\bibnumfmt}[1]{[A#1]}
\renewcommand{\citenumfont}[1]{A#1}
\renewcommand{\thepage}{A\arabic{page}}

In this Appendix, we provide the proofs of the results stated in the main text. We make frequent use of the following objects:
\begin{itemize}
    \item {\it Fock basis}, indicated with latin alphabet kets (or bra): $\{|k\rangle\}_{k=0}^\infty$; 
    \item {\it Coherent states} with amplitude $\alpha\in\mathbb{C}$, indicated with greek alphabet kets (or bra): $|\alpha\rangle= e^{-\frac{|\alpha|^2}{2}}\sum_{k=0}^\infty \frac{\alpha^k}{\sqrt{k!}}|k\rangle$;
    \item {\it Thermal States} with $N_B$ average photon numbers: $\rho_B=\sum_k \tau_k |k\rangle\langle k|$, where $\tau_k=\frac{1}{1+N_B}\left(\frac{N_B}{1+N_B}\right)^k$;
    \item {\it General signal-idler state} of $r$ Schmidt-rank: $|\psi\rangle_{SI}=\sum_{k=0}^r \sqrt{p_k}|v_k\rangle_I |w_k\rangle_S$. The signal mode is indicated by $\hat a_S$ and we use indistinctively the notation $|v\rangle|w\rangle$ and $|v,w\rangle$. The Schmidt rank $r$ differentiates between the entangled ($r>1$) and the idler-free ($r=1$) cases;
    \item {\it Constants}: $c_B=\frac{N_B}{1+N_B}$ and $c_S=\frac{N_S}{1+N_S}$.
    \end{itemize}

\section{Appendix~A: Receiver error probability}\label{suppl:I}

In this section, we discuss the results based on the calculation of the Chernoff bounds and the quantum Fisher information (QFI) relative to Bob's receiver. In the following, as Bob and Alice are sharing $\varphi_k$ values, we can set it to zero.

\subsection{1. Equivalence between OOK and BPSK}

In the $\eta\ll1$ limit, we can map the problem of discriminating between different $\phi$ to the quantum illumination (QI) setup, where Alice decides to modulate the amplitude between $\eta=0$ and $\eta=\bar \eta$, leaving the phase unchanged ($\phi=0$). This is usually referred as On-Off-Keying (OOK). We first notice that the received modes $\{\hat a_R^{(k)}\}$ can be expressed as
\begin{align}
    \hat a_R^{(k)}=\eta~(e^{-i\phi} \hat a_S^{(k)})+\sqrt{1-\eta^2}~\hat h^{(k)}, 
\end{align}
where $\hat h^{(k)}\equiv \sqrt{\frac{\eta}{1+\eta}}~e^{-i
\phi}\hat h_\leftarrow^{(k)}+\sqrt{\frac{1}{1+\eta}}~\hat h_\rightarrow^{(k)}$ are thermal modes with $N_B$ average number of photons. This means that if $|\psi\rangle_{SI}\langle \psi|$ is the state of the SI system, then the final Bob's state is $\rho_{\eta,\phi}=\text{Tr}_E\,\left[\hat B_\eta\hat U_\phi|\psi\rangle_{SI}\langle \psi|\otimes \rho_B \hat U_\phi^\dag\hat B_\eta^\dag\right]$, where $\hat B_\eta=\exp\left[\arccos \eta( \hat a_S^\dag \hat h - \hat a_S\hat h^\dag)\right]$ is a beamsplitter operation and $\hat U_\phi=e^{-i\phi \hat n_S}$ is a phase-shift operation. We can now prove the equivalence between OOK and BPSK.

\begin{lemma}{\bf[Equivalence of OOK and BPSK]}\label{lemma1}
In the $\eta\ll1$ limit, the BPSK and OOK optimal strategies are the same for both the local and the collective cases. The BPSK performs as an OOK with $\bar \eta=2\eta$. 
\end{lemma}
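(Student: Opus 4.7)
The plan is to expand both density matrices $\rho_{\eta,\phi}^{IR}$ to first order in $\eta$ around the common $\eta = 0$ state, show that the resulting $\phi$-dependence enters only through $e^{\pm i\phi}$, and then invoke the fact that the leading-order behavior of both the Chernoff exponent (collective) and the QFI (local) depends only on the state difference, not on the individual states.

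First I would compute the characteristic function of $\rho_{\eta,\phi}^{IR}$. Using $\hat a_R = \eta e^{-i\phi}\hat a_S + \sqrt{1-\eta^2}\,\hat h$ and $[\hat a_S, \hat h] = 0$, the displacement on the output mode factorizes as $D_R(\lambda) = D_S(\eta e^{i\phi}\lambda)\,D_h(\sqrt{1-\eta^2}\lambda)$, yielding
\begin{equation*}
\chi^{IR}_{\eta,\phi}(\mu,\lambda) \;=\; \chi^{IS}_{|\psi\rangle}\!\bigl(\mu,\eta e^{i\phi}\lambda\bigr)\,\chi_B\!\bigl(\sqrt{1-\eta^2}\,\lambda\bigr).
\end{equation*}
Taylor-expanding in $\eta$ and using $\sqrt{1-\eta^2} = 1 + O(\eta^2)$, the zeroth-order term $\chi_I(\mu)\chi_B(\lambda)$ corresponds to the product state $\rho_0 = \rho_I \otimes \rho_B$, independent of $\phi$, while the first-order correction
\begin{equation*}
\eta\bigl[e^{i\phi}\lambda\,\partial_z\chi^{IS}(\mu,0) + e^{-i\phi}\lambda^*\partial_{z^*}\chi^{IS}(\mu,0)\bigr]\,\chi_B(\lambda)
\end{equation*}
depends on $\phi$ only through $e^{\pm i\phi}$. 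Hence $\rho_{\eta,\phi}^{IR} = \rho_0 + \eta R_\phi + O(\eta^2)$ with $R_\pi = -R_0$, so that
\begin{equation*}
\rho_{\eta,0}^{IR} - \rho_{\eta,\pi}^{IR} = 2\eta R_0 + O(\eta^2), \qquad \rho_{\bar\eta,0}^{IR} - \rho_{0,0}^{IR} = \bar\eta R_0 + O(\bar\eta^2).
\end{equation*}
Setting $\bar\eta = 2\eta$ makes the two state-differences coincide to first order in $\eta$.

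The last step is to translate this state-difference equivalence into equivalence of the optimal exponents and strategies. For two states close to a common reference $\rho_0$, the perturbative expansion $-\log\mathrm{Tr}(\rho_1^s\rho_2^{1-s}) = \tfrac{1}{2}s(1-s)\,Q_{\rho_0}(\rho_1-\rho_2) + O(\|\rho_1-\rho_2\|^3)$, maximized at $s = 1/2$, gives a Chernoff exponent $\beta^{\mathrm{col}} \simeq \tfrac{1}{8}Q_{\rho_0}(\rho_1-\rho_2)$ that depends only on the difference; plugging in the two matching differences with $\bar\eta = 2\eta$ yields the same $\beta^{\mathrm{col}}$. For the local threshold strategy, the QFI is evaluated at the common midpoint $\kappa = 0$ and the gap between parameter values is $2\eta$ for BPSK and $\bar\eta = 2\eta$ for OOK, giving the same EP scaling $\exp[-\eta^2 F M/2]$; in both problems the optimal single-copy observable is the SLD at $\rho_0$ in the direction $R_0$, so the optimal measurements coincide.

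The main obstacle will be controlling second-order-in-$\eta$ discrepancies when comparing the two problems: for OOK the reference state is exactly $\rho_0$, while for BPSK it is the midpoint $(\rho_{\eta,0}^{IR} + \rho_{\eta,\pi}^{IR})/2 = \rho_0 + O(\eta^2)$, and one must check that this $O(\eta^2)$ shift does not alter the leading $\eta^2$ coefficient of $\beta^{\mathrm{col}}$. A related technical point is to verify that the Chernoff optimum is indeed attained at $s = 1/2$ at leading order (consistent with symmetric discrimination under equal priors), so that the same quadratic form $Q_{\rho_0}$ controls both exponents with the stated rescaling $\bar\eta = 2\eta$.
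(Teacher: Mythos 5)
Your proposal is correct and follows essentially the same route as the paper: a first-order expansion of $\rho_{\eta,\phi}$ in $\eta$ around the common reference $\rho_I\otimes\rho_B$, the observation that the first-order term flips sign under $\phi\to\phi+\pi$, and hence the identification $\rho_{\eta,\pi}-\rho_{\eta,0}=\rho_{0,0}-\rho_{2\eta,0}+o(\eta)$. The paper obtains the first-order term directly as a partial-traced commutator rather than via characteristic functions, and leaves implicit the final step (that matching state differences imply matching leading-order Chernoff/QFI exponents and optimal measurements) which you usefully spell out, including the harmless $O(\eta^2)$ shift of the reference state.
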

\begin{proof}
We have that $\rho_{\eta,0}=\rho_{B}-\eta d\rho +o(\eta)$, $\rho_{\eta,\pi}=\rho_B+\eta d\rho +o(\eta)$ for $\eta\ll1$, where $d\rho=\text{Tr}_
S\,[\hat a_S^\dag \hat h-\hat a_S\hat h^\dag, |\psi\rangle_{SI}\langle \psi|\otimes \rho_B]$. Therefore, $\rho_{\eta,0}^{\otimes n}=\rho_B^{\otimes n}-\eta d\sigma +o(\eta)$ and $\rho_{\eta,\pi}^{\otimes n}=\rho_B^{\otimes n}+\eta d\sigma +o(\eta)$, with $d\sigma=\sum_{i=1}^n \rho_B^{\otimes j-1}\otimes d\rho \otimes \rho_B^{\otimes n-j}$. 
This means that $(\rho_{\eta,\pi}^{\otimes n}-\rho_{\eta,0}^{\otimes n})=(\rho_{0,0}^{\otimes n}-\rho_{2 \eta,0}^{\otimes n})+o(\eta)$. Therefore, BPSK performs as an OOK with $\bar \eta=2\eta$ in the $\eta\ll1$ limit, and their measurement setups - being local or collective - are the same. 
\end{proof}

\subsection{2. Chernoff bound and quantum Fisher information: general formulas}

We can now analyze the OOK case to state the general formulas for the quantum Chernoff bound and quantum Fisher information for the BPSK case. In the following, we denote $\rho_\eta\equiv \rho_{\eta,0}$.

\begin{lemma}{\bf [Chernoff bound for QI (OOK)]}\label{Chernoff}
Given $\rho_{\eta}= \text{Tr}_E\, (\hat B_{\eta}|\psi\rangle_{SI}\langle \psi| \otimes \rho_B \hat B_{\eta}^\dag)$, with $\hat B_\eta = \exp{\left[\arccos\sqrt{\eta}(\hat a_S^\dag \hat h -\hat a_S \hat h^\dag)\right]}$. Then, the optimal error probability in the task of distinguishing between $\rho_0$ and $\rho_{\bar \eta}$ is $p_{\rm err}\leq \frac{1}{2} e^{-MC(\rho_0,\rho_{\bar \eta})}$ for $M\gg1$, where
\begin{equation}
C(\rho_0,\rho_{\bar \eta}) = \frac{\bar \eta^2}{1+N_B} \sum_{k,k'}\frac{p_{k}p_{k'}|\langle w_{k'}|\hat a_S|w_{k}\rangle |^2}{\left[\sqrt{p_{k'}}+\sqrt{p_{k}}\sqrt{c_B}\right]^2} + o(\bar \eta ^2).\label{chern}
\end{equation}
\end{lemma}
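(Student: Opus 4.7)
The plan is to carry out a second-order expansion of the Chernoff function $Q(s)\equiv\mathrm{Tr}(\rho_0^s\rho_{\bar\eta}^{1-s})$ in $\bar\eta$, minimize over $s\in(0,1)$, and match the result with the displayed formula. First I identify $\rho_0$: since $\arccos\sqrt{0}=\pi/2$, the beamsplitter $\hat B_0$ implements a swap (up to a sign) between the signal and environment modes, so tracing out the environment gives $\rho_0=\rho_I\otimes\rho_B$, where $\rho_I=\sum_k p_k|v_k\rangle\langle v_k|$ is the reduced idler state from the Schmidt decomposition $|\psi\rangle_{SI}=\sum_k\sqrt{p_k}|v_k\rangle|w_k\rangle$ and $\rho_B=(1-c_B)\sum_n c_B^n|n\rangle\langle n|$ is the thermal state. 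The eigenbasis of $\rho_0$ is $\{|v_k,n\rangle\}$ with eigenvalues $\mu_{k,n}=p_k(1-c_B)c_B^n$.

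Next I extract the leading perturbation $d\rho$. Writing $\hat B_{\bar\eta}=\hat B_0\,e^{-\arcsin\sqrt{\bar\eta}\,K}$ with $K=\hat a_S^\dag\hat h-\hat a_S\hat h^\dag$ (the two factors commute because $\hat B_0$ is itself a function of $K$), a BCH expansion followed by partial trace over the $\hat h$ mode reduces the computation to the action of $K$ on $\tilde\chi\equiv\hat B_0(|\psi\rangle\langle\psi|\otimes\rho_B)\hat B_0^\dag=\rho_B^{(S)}\otimes|\psi\rangle\langle\psi|_{IE}$ (using that $\hat B_0$ performs the swap). After the partial trace, the non-trivial matrix elements of $d\rho$ in the basis $\{|v_k,n\rangle\}$ collapse to Fock-nearest-neighbor form, proportional to $\sqrt{n}\,\sqrt{p_k p_{k'}}\,\langle w_{k'}|\hat a_S|w_k\rangle$ (and its Hermitian conjugate), with a thermal-weight prefactor coming from $[\hat a_S^\dag,\rho_B]=(1-c_B)\hat a_S^\dag\rho_B$ and its adjoint; the mean-field ``$\sqrt{\bar\eta}$'' contributions drop out of the Chernoff metric under the symmetry assumption $\sum_k p_k\langle w_k|\hat a_S|w_k\rangle=0$ that holds for the relevant signals (TMSV, Schrödinger cat, and more generally any signal with $\langle\hat a_S\rangle_\psi=0$).

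I then invoke the standard second-order expansion of the quantum Chernoff divergence for close states (Audenaert et al., Calsamiglia et al., cited in the main text). After minimization over $s$, $-\log Q(s)$ reduces at the optimal $s^\ast=1/2$ to the Bures-type quadratic form $\bar\eta^2\sum_{m\neq n}|\langle m|d\rho|n\rangle|^2/(\sqrt{\mu_m}+\sqrt{\mu_n})^2$. Substituting the eigenvalues from the first step and the matrix elements from the second, the denominator factorizes as $(\sqrt{\mu_{k,n}}+\sqrt{\mu_{k',n-1}})^2=(1-c_B)c_B^{n-1}(\sqrt{p_{k'}}+\sqrt{p_k c_B})^2$, producing the asymmetric denominator of the claim. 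The Fock index then sums in closed form using the geometric identities $\sum_{n\ge 1}n c_B^{n-1}=(1-c_B)^{-2}$ and $1-c_B=(1+N_B)^{-1}$, combining with the thermal weights from the squared matrix elements to leave exactly the $(1+N_B)^{-1}$ prefactor and the double Schmidt sum advertised.

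The main technical obstacle is the rigorous control of the Chernoff expansion in infinite dimension: the Duhamel representation of $\rho_{\bar\eta}^{1-s}$ has to be dominated uniformly against the geometrically decaying thermal tail so that the $\bar\eta\to 0$ limit commutes with the Fock sum, and one must justify that the Bures-type form indeed realizes the minimum over $s$ in the multi-index sum rather than only pair-by-pair. Both points go through because the thermal spectrum decays exponentially and $d\rho$ has nearest-Fock-neighbor support, but they are where one has to be most careful to promote the formal computation into a proof.
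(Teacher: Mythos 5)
Your overall strategy is the same as the paper's: identify $\rho_0=\rho_I\otimes\rho_B$ with eigenbasis $\{|v_k,n\rangle\}$, expand the Chernoff quantity to second order in the perturbation, reduce it to the Calsamiglia--Audenaert quadratic form in $d\rho$, compute the nearest-Fock-neighbour matrix elements of $d\rho$, and resum the geometric series over $n$. The computational skeleton (the factorization of the denominator into $(1-c_B)c_B^{n-1}(\sqrt{p_{k'}}+\sqrt{p_k c_B})^2$, the commutator identity $[\hat a_S^\dag,\rho_B]=(1-c_B)\hat a_S^\dag\rho_B$) matches what the paper actually does. However, there is one step that would fail outright. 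The ``symmetry assumption'' $\sum_k p_k\langle w_k|\hat a_S|w_k\rangle=0$ that you invoke to discard the ``mean-field'' contributions is neither needed nor admissible: the paper applies this very lemma to the Schmidt-rank-one case $|w\rangle=|\alpha\rangle$ to obtain the coherent-state benchmark $\beta^{col}_{cl}$, and there the \emph{only} surviving term in the double sum is the diagonal one with $|\langle w|\hat a_S|w\rangle|^2=N_S$. Under your assumption the lemma would return zero for exactly the transmitter it is designed to benchmark. In the correct derivation nothing is discarded: the first-order term $d\rho=\mathrm{Tr}_S[\hat a_S^\dag\hat h-\hat a_S\hat h^\dag,|\psi\rangle\langle\psi|\otimes\rho_B]$ is precisely what feeds the quadratic form, and the Schmidt-diagonal ($k=k'$) matrix elements enter on the same footing as the off-diagonal ones.

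There is also an order-counting inconsistency you need to resolve rather than paper over. With your parametrization $\hat B_{\bar\eta}=\hat B_0\,e^{-\arcsin\sqrt{\bar\eta}\,K}$ the perturbation of the state is $O(\sqrt{\bar\eta})$, so the quadratic form would carry a coefficient $\bar\eta$, not $\bar\eta^2$; you cannot expand in $\sqrt{\bar\eta}$ and then quote the answer as $\bar\eta^2$ times the quadratic form in the first derivative. The lemma's displayed formula is only consistent with the convention used everywhere else in the paper (Lemma~\ref{lemma1} and the main text, where $\hat a_R=\bar\eta\,e^{-i\phi}\hat a_S+\sqrt{1-\bar\eta^2}\,\hat h$), i.e.\ amplitude transmissivity $\bar\eta$ and $\hat B_{\bar\eta}=e^{\arccos\bar\eta\,K}$, so that $\rho_{\bar\eta}=\rho_0+\bar\eta\,d\rho+o(\bar\eta)$ and the exponent is genuinely quadratic in $\bar\eta$. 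Finally, watch the constant: Eq.~(47) of Calsamiglia et al.\ is $\tfrac12\sum_{ij}$ over all ordered index pairs, whereas you write $\sum_{m\neq n}$ without the $\tfrac12$; the paper recovers the stated prefactor by using the $k\leftrightarrow k'$ symmetry to merge the $n'=n+1$ and $n'=n-1$ branches, and your sketch as written is off by a factor of $2$ unless the sum is over unordered pairs. The closing remarks about uniform control of the Duhamel expansion in infinite dimension are fair, but the paper does not address them either; they are not the gap.
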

\begin{proof}
This is a simple application of one of the results in Calsamiglia et al.~\cite{S_Calsamiglia08}. We have that $C(\rho_0,\rho_{\bar \eta})=-\min_{s\in[0,1]}\log \text{Tr}(\rho_0^s \rho_{\bar \eta}^{1-s})$. Considering the Taylor expansion around $\bar \eta=0$, $\rho_{\bar \eta}=\rho_0 + \bar \eta d\rho+o(\bar \eta)$, then
\begin{align}
    C(\rho_0,\rho_{\bar \eta}) &= \frac{\bar \eta^2}{2}\sum_{kk'nn'}\frac{|\langle v_k, n |d\rho |v_{k'}n'\rangle|^2}{[\sqrt{p_{k}\tau_n}+\sqrt{p_{k'}\tau_{n'}}]^2} + o(\bar \eta^2)\\
    \quad &=\bar \eta^2\beta^{col} +o(\bar \eta^2), \label{Cher1}
\end{align}
see Equation~(47) of Ref.~\cite{S_Calsamiglia08}. The task reduces in computing Eq.~\eqref{Cher1} with $d\rho=\text{Tr}_S [\hat a_S^\dag \hat h-\hat a_S \hat h^\dag,|\psi\rangle_{SI}\langle\psi|\otimes \rho_{B}]$. First, we notice that 
\begin{equation}
\langle v_k, n |d\rho |v_{k'}n'\rangle = (\tau_{n'}-\tau_n)[\langle w_{k'}|\hat a_S^\dag | w_k\rangle\sqrt{n+1}\delta_{n',n+1}-\langle w_{k'}|\hat a_S| w_{k}\rangle\sqrt{n'+1}\delta_{n,n'+1}].
\end{equation}
We have that
\begin{align}
\beta^{col}&=\frac{1}{2}\sum_{k k' n n'}\frac{|\langle v_k,n|\sum_{jj'}\sqrt{p_j p_{j'}} |v_j\rangle \langle v_{j'}|\otimes [\langle w_{j'}|\hat a_S^\dag|w_{j}\rangle \hat h - \langle w_{j'}|\hat a_S | w_{j}\rangle \hat h^\dag, \rho_{B}] |v_{k'},n'\rangle|^2}{[\sqrt{p_k\tau_n}+\sqrt{p_{k'}\tau_{n'}}]^2} \\
\quad&=\frac{1}{2}\sum_{k k' n n'} \frac{p_{k}p_{k'}|\langle w_{k'}|\hat a_S^\dag|w_{k}\rangle(\tau_{n'}-\tau_n)\sqrt{n+1}\delta_{n',n+1}- \langle w_{k'}|\hat a_S|w_{k}\rangle(\tau_{n'}-\tau_n)\sqrt{n'+1}\delta_{n,n'+1} |^2}{[\sqrt{p_k\tau_n}+\sqrt{p_{k'}\tau_{n'}}]^2}\\
\quad&=\frac{1}{2}\sum_{k k' n n'}\frac{p_{k}p_{k'}(\tau_{n'}-\tau_n)^2}{[\sqrt{p_k\tau_n}+\sqrt{p_{k'}\tau_{n'}}]^2}\left[|\langle w_{k'}|\hat a_S^\dag|w_{k}\rangle|^2(n+1)\delta_{n',n+1} + |\langle w_{k'}|\hat a_S |w_{k}\rangle|^2 (n'+1)\delta_{n,n'+1}\right] \\
\quad&=\frac{1}{2}\sum_{k k' n} (n+1)p_{k}p_{k'}[\tau_{n+1}-\tau_n]^2 \left[\frac{|\langle w_{k'}|\hat a_S^\dag|w_{k}\rangle|^2}{[\sqrt{p_k\tau_n}+\sqrt{p_{k'}\tau_{n+1}}]^2}+\frac{|\langle w_{k'}|\hat a_S|w_{k}\rangle|^2}{[\sqrt{p_k\tau_{n+1}}+\sqrt{p_{k'}\tau_{n}}]^2} \right],
\end{align}
where in the last line we have summed on the $n'$ index. We now use that the last sum is symmetric under the exchange of $k$ and $k'$ and that  $\tau_n/\tau_{n-1}=\frac{N_B}{1+N_B}$:
\begin{align}
\beta^{col}&=\sum_{k k' n} (n+1)p_{k}p_{k'}[\tau_{n+1}-\tau_n]^2 \frac{|\langle w_{k'}|\hat a_S|w_{k}\rangle|^2}{[\sqrt{p_k\tau_{n+1}}+\sqrt{p_{k'}\tau_{n}}]^2})  \\
\quad& =  \sum_{k k' n} (n+1)\tau_np_{k}p_{k'}\left[1-\frac{\tau_{n+1}}{\tau_n}\right]^2 \frac{|\langle w_{k'}|\hat a_S|w_{k}\rangle|^2}{\left[\sqrt{p_{k'}}+\sqrt{p_{k}}\sqrt{\frac{\tau_{n+1}}{\tau_n}}\right]^2} \\
\quad&= \frac{1}{1+N_B}\sum_{kk'}\frac{p_{k}p_{k'}|\langle w_{k'}|\hat a_S|w_{k}\rangle|^2}{\left[\sqrt{p_{k'}}+\sqrt{p_{k}}\sqrt{\frac{N_B}{1+N_B}}\right]^2}. \label{chercalc}
\end{align}
\end{proof}
    
\begin{lemma}{\bf [Quantum Fisher information for QI (OOK)~\cite{S_Sanz17}]}\label{lemmaQFI}
Given $\rho_{\eta}$ as in Lemma~\ref{Chernoff}. Then, the quantum Fisher information for estimating the parameter $\eta$ in the $\eta\ll1$ neighborhood is
\begin{equation}\label{QF}
F = \frac{4}{1+N_B} \sum_{k,k'}\frac{p_{k}p_{k'}}{p_{k'}+p_{k}c_B}|\langle w_{k'}|\hat a_S|w_{k}\rangle |^2.
\end{equation}

\end{lemma}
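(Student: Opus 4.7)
The plan is to mirror the Chernoff-bound computation of Lemma~\ref{Chernoff}, replacing the denominator $[\sqrt{p_k\tau_n}+\sqrt{p_{k'}\tau_{n'}}]^2$ that appears there by the QFI denominator $p_k\tau_n+p_{k'}\tau_{n'}$, and accounting for the standard $\frac{1}{2}$ versus $2$ prefactor difference between the Chernoff and QFI formulas.

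First I would identify the eigenbasis of $\rho_0$. Since $\hat B_0$ implements a full swap of $\hat a_S$ and the environmental mode $\hat h$, tracing out $E$ leaves $\rho_0=\rho_B\otimes\rho_I$, where $\rho_I=\sum_k p_k|v_k\rangle\langle v_k|$ is the reduced idler state of $|\psi\rangle_{SI}$. A convenient eigenbasis is therefore $\{|v_k\rangle_I|n\rangle_S\}$ with eigenvalues $p_k\tau_n$.

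Next I would reuse the matrix elements of $d\rho$ already obtained in the proof of Lemma~\ref{Chernoff},
\begin{equation*}
\langle v_k,n|d\rho|v_{k'},n'\rangle=(\tau_{n'}-\tau_n)\bigl[\langle w_{k'}|\hat a_S^{\dagger}|w_k\rangle\sqrt{n+1}\,\delta_{n',n+1}-\langle w_{k'}|\hat a_S|w_k\rangle\sqrt{n'+1}\,\delta_{n,n'+1}\bigr].
\end{equation*}
The two Kronecker deltas are mutually exclusive, so squaring produces two disjoint contributions. Substituting into the QFI formula and using the $k\leftrightarrow k'$ symmetry --- the same manipulation that merged the $\hat a_S^\dagger$ and $\hat a_S$ contributions in Lemma~\ref{Chernoff} --- reduces the expression to twice the $n'=n+1$ term with $\hat a_S$.

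Finally I would apply the three algebraic identities already used in the Chernoff proof: $(\tau_{n+1}-\tau_n)^2=\tau_n^2/(1+N_B)^2$, $p_k\tau_{n+1}+p_{k'}\tau_n=\tau_n(p_{k'}+p_kc_B)$, and $\sum_{n\geq0}(n+1)\tau_n=1+N_B$. Collecting the factors reproduces the announced Eq.~\eqref{QF}. The only non-mechanical step is the change of denominator from $[\sqrt{\lambda_m}+\sqrt{\lambda_n}]^2$ to $\lambda_m+\lambda_n$, which is exactly what turns the Chernoff-TMSV prefactor $(1+\sqrt{c_Sc_B})^{-2}$ into the QFI-TMSV prefactor $(1+c_Sc_B)^{-1}$ quoted for local strategies in Theorem~\ref{QFIbound}. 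I do not anticipate any obstacle beyond careful bookkeeping.
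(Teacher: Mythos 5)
Your proposal is correct and follows essentially the same route as the paper, which disposes of this lemma by quoting the QFI formula with denominator $p_k\tau_n+p_{k'}\tau_{n'}$ and stating that the calculation is similar to that of Lemma~\ref{Chernoff}; your bookkeeping of the prefactors ($2$ versus $\tfrac12$, the $k\leftrightarrow k'$ merge, and the three algebraic identities) is exactly what is needed. The only caveat is to carry the factor $\sqrt{p_kp_{k'}}$ in the matrix element $\langle v_k,n|d\rho|v_{k'},n'\rangle$ (it is omitted in the displayed expression you quote, though it reappears in the subsequent lines of the Chernoff proof), since it supplies the $p_kp_{k'}$ in the numerator of Eq.~\eqref{QF}.
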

\begin{proof}
The QFI is given by~\cite{S_Paris09}
\begin{equation}
    F=2\sum_{kk' nn'}\frac{|\langle v_k, n |d\rho |v_{k'}n'\rangle|^2}{p_{k}\tau_n+{p_{k'}\tau_{n'}}},
\end{equation}
where $d\rho=\text{Tr}_S[\hat a_S^\dag \hat h-\hat a_S \hat h^\dag,|\psi\rangle_{SI}\langle\psi|\otimes \rho_{B}]$. The calculation is similar as in Lemma~\ref{Chernoff}.
\end{proof}

\subsection{3. Ultimate error probability bounds for the receiver}
We can now prove some of the Theorems of section~\hyperref[sec:III]{III} the main text.

\begin{proof}[Proof of Theorem~\ref{QCB}]
For the idler-free case, we apply Lemma~\ref{Chernoff} with $\bar \eta=2\eta$ (see Lemma~\ref{lemma1}) to the simple case of Schmidt-rank one, finding that $\beta^{loc}_{cl}=\frac{4|\langle w|\hat a_S| w\rangle|^2}{1+N_B}\frac{1}{(1+\sqrt{c_B})^2}$. Then, by applying the H\"older's inequality, we find that  $|\langle w|a_S| w\rangle|^2\leq \| a_S|w\rangle\|_2^2= N_S$, which is saturated by $|w\rangle=|\alpha\rangle$.
For the general case, by applying the inequality $\frac{p_{k'}}{\left[\sqrt{p_{k'}}+\sqrt{p_{k}}\sqrt{c_B}\right]^2}\leq 1$ to Eq.~\eqref{chern} with $\bar \eta=2\eta$ (see Lemma~\ref{lemma1}), we obtain $\beta^{col}\leq \frac{4}{1+N_B}\sum_{k,k'}p_{k}\langle w_{k}|\hat a_S^\dag |w_{k'}\rangle\langle w_{k'}|\hat a_S|w_{k}\rangle$. By using the completeness relation $\sum_{k'}|w_{k'}\rangle\langle w_{k'}|=\mathbb{I}$ -  which can be assumed by adding zero probability terms to the sum - and by noticing that $N_S=\sum_{k}p_{k}\langle w_{k}|\hat a_S^\dag \hat a_S |w_k\rangle$, we conclude that $\beta^{col}\leq \frac{4N_S}{1+N_S}$. 
By applying the inequality of arithmetic and geometric means $\frac{p_{k} p_{k'}}{[\sqrt{p_{k'}}+\sqrt{p_{k}}c_B]^2}\leq \frac{\sqrt{p_{k}p_{k'}}}{4\sqrt{c_B}}\leq \frac{p_{k}+p_{k'}}{8\sqrt{c_B}}$, and by using the completeness relation, we find the second inequality $\beta^{col}\leq \frac{2N_S+1}{1+N_B}\frac{1}{4\sqrt{c_B}}$. Moreover, no mixed state can do better, as in this case the bound can be applied to its purification.
\end{proof}

\begin{proof}[Proof of Theorem~\ref{QFIbound}]
For the idler-free case, by applying Lemma~\ref{lemmaQFI} with $\bar \eta=2\eta$ (seel Lemma~\ref{lemma1}) to the Schmidt-rank one case, we find that $\beta_{cl}^{loc}=\frac{2|\langle w|\hat a_S| w\rangle|^2}{1+N_B}\frac{1}{1+c_B}$, which is maximal for $|w\rangle=|\alpha\rangle$ (see the proof of Theorem~\ref{QCB}). Homodyne is optimal as one can directly see by checking that the signal-to-noise ratio $\langle\hat x\rangle^2_{\rho_{\bar\eta}}/\langle\hat x^2\rangle_{\rho_0}$ saturates the QFI, and by using Lemma~\ref{lemma1}.
The general bound is found similarly as in the proof of Theorem~\ref{QCB}, with the inequalities $\frac{p_{k'}}{p_{k'}+p_{\alpha}c_B}\leq 1$ and $\frac{p_{k}p_{k'}}{p_{k'}+p_{k}c_B}\leq \frac{p_{k}+p_{k'}}{4\sqrt{c_B}}$ applied to Eq.~\eqref{QF}. Also in this case no mixed state can do better, by applying the bound to the purified state.

\end{proof}

\subsection{4. Examples}

{\it QCB and QFI of TMSV states}: This is done by setting $p_{k}=\frac{1}{1+N_S}c_S^k$ and $|w_k\rangle=|k\rangle$ (Fock state with $k$ photons) into the Eq.~\eqref{chern}, where we set $\bar \eta=2\eta$, and Eq.~\eqref{QF}. It results in a sum of a geometric series and its first derivative, that can be cast as written in the main text. Similarly, the optimal observable for the threshold discrimination strategy is found by computing $\sum_{k k' n n'}\frac{\langle k',n'|d\rho|k,n\rangle}{p_{k'}\tau_{n'}+p_{k}\tau_{n}}|k', n'\rangle\langle k,n|$~{\it (32)}.

{\it QCB and QFI of SC states}: This is done by applying Lemma~\ref{Chernoff} and \ref{lemmaQFI} to the Schmidt decomposition of the SC state given in Eq.~\eqref{SCShmidt} of the main text, i.e. $|\psi\rangle_{\rm SC}= \sqrt{\lambda_+}|g\rangle|\alpha_+\rangle + \sqrt{\lambda_-}|e\rangle|\alpha_-\rangle$. We then use Lemma~\ref{lemma1} to bring the result to the BPSK case. The result is
\begin{align}
\beta_{\rm SC}^{col} = &\frac{N_S}{1+N_B}f^{col}_{\rm SC}(N_S,N_B)\\
\beta_{\rm SC}^{loc} = &\frac{N_S}{1+N_B}f^{loc}_{\rm SC}(N_S,N_B)
\end{align}
with 
\begin{align}
f^{col}_{\rm SC}(N_S,N_B)&=\frac{4\lambda_+^2}{\left(\sqrt{\lambda_+}+\sqrt{\lambda_-}\sqrt{c_B}\right)^2}+\frac{4\lambda_-^2}{\left(\sqrt{\lambda_-}+\sqrt{\lambda_+}\sqrt{c_B}\right)^2},\\
f^{col}_{\rm SC}(N_S,N_B)&= \frac{2\lambda_+^2}{\lambda_++\lambda_-c_B}+\frac{2\lambda_-^2}{\lambda_-+\lambda_+c_B}.
\end{align}
In the $N_B\gg1$ limit, these quantities approximate to 
\begin{align}
 f^{col}_{\rm SC} &\stackrel{N_B\gg1}{=}\frac{2+2e^{-4N_S}}{1+\sqrt{1-e^{-4N_S}}}=4-8\sqrt{N_S} +O(N_S),  \\
f^{loc}_{\rm SC} &\stackrel{N_B\gg1}{=}1+e^{-4N_S}=2-4N_S+O(N_S^2),
\end{align}
where we have used that $\lambda_{\pm}=\frac{1}{2}[1\pm e^{-2N_S}]$. The optimal local observable can be found by computing 
\begin{equation}
\hat O_{opt}=\sum_{k k'\in \{g, e\};\;n n'\in [0,\infty]}\frac{\langle k',n'|d\rho|k,n\rangle}{p_{k'}\tau_{n'}+p_{k}\tau_{n}}|k', n'\rangle\langle k,n|,
\end{equation}
where $p_e=\lambda_-$ and $p_g=\lambda_+$. Alternatively, one can directly compute the SNR of $\hat O_{opt}$ and see that it saturates the QFI in the $N_B\gg1$ limit.

\section{Appendix~B: Covert Quantum Communication}\label{suppl:II}

Here, we provide the technical details to prove the main results on covert quantum communication. We denote Eve's quantum state when Alice applies a phase modulation $\tilde \varphi$ as $\rho^{(N_S)}_{\tilde \varphi}$. $N_S>0$ corresponds to the $on$-setting, while $N_S=0$ is the ${\it off}$-setting. We drop any $k$ superscript and subscript, as we are in the i.i.d assumptions. In addition, we introduce the beamsplitter unitary operator $\hat B_{12}=\exp{\left[\theta(\hat a_1 \hat a_2^\dag-\hat a_1^\dag \hat a_2)\right]}$, where $\theta=\arccos\sqrt{\eta}$. Let us introduce 
\begin{align}
  \mathcal{E}_{\tilde \varphi}[\sigma]&=\text{Tr}_S\, \left[\hat B_{\rightarrow, S}e^{-i\tilde \varphi \hat n_S}\hat B_{\leftarrow,S} (\rho_B\otimes\rho_B\otimes\sigma)\hat B_{\leftarrow,S}^\dag e^{i\tilde \varphi \hat n_S} \hat B_{\rightarrow, S}^\dag\right]\\
  \quad&= e^{-i\tilde \varphi \hat n_\rightarrow}\text{Tr}_S\, \left[\hat B_{\rightarrow, S}\hat B_{\leftarrow,S} (\rho_B\otimes\rho_B\otimes\sigma)\hat B_{\leftarrow,S}^\dag \hat B_{\rightarrow, S}^\dag\right] e^{i\tilde \varphi \hat n_\rightarrow}, \label{beamspl}
\end{align}
 where $\sigma=\sum_{k,k'}c_{kk'}|k\rangle_S\langle k'|$, $\text{Tr}_S$ denotes the partial trace on the signal mode, and the equality is due to the phase-invariance of the thermal state. The latter is evident at seeing the input-output relations in Eqs.~\eqref{on1}-\eqref{on2} of the main text. Let us denote by $\rho^{(N_S)}=\frac{1}{|\mathcal{A}|}\sum_{\tilde \varphi\in\mathcal{A}} \rho_{\tilde \varphi}^{(N_S)}$, where $\rho_{\tilde \varphi}^{(N_S)}=\mathcal{E}_{\tilde\varphi}[\rho_S]$. Here, we denote the phase shift at Alice as $\tilde\varphi$, which is $\varphi+\phi$ or $\varphi$ depending if we are in the $on$ or ${\it off}$ setting respectively. In both cases, $\tilde\varphi$ is distributed uniformly at random in $\mathcal{A}$.  

\begin{proof}[Proof of Lemma~\ref{theocovert}] \mbox{}\\*
{\bf [Achievability]} We have that $P^{({\rm Eve})}= \frac{1}{2}\left[1-\frac{1}{2}\| {\rho^{(N_S)}}^{\otimes n}-{\rho^{(0)}}^{\otimes n}\|_1\right]$~\cite{S_Calsamiglia08}. As done in Ref.~\cite{S_Bash15}, we can simplify the calculation by using the Pinsker's inequality, i.e. $\| \rho_a - \rho_b\|_1\leq \sqrt{2 D(\rho_a,\rho_b)}$ for any states $\rho_a$ and $\rho_b$, where $D(\rho_a,\rho_b)=-\text{Tr }\, \rho_a \ln \rho_b+\text{Tr}\, \rho_a\ln \rho_a$ is the quantum relative entropy between $\rho_a$ and $\rho_b$. This provides the bound 
\begin{align}
    P^{({\rm Eve})}\geq \frac{1}{2}-\sqrt{\frac{1}{8}D({\rho^{(0)}}^{\otimes n},{\rho^{(N_S)}}^{\otimes n})},
\end{align}
meaning that 
\begin{equation}\label{rel}
    D({\rho^{(0)}}^{\otimes n},{\rho^{(N_S)}}^{\otimes n})\leq 8\delta^2 
\end{equation}
ensures that $P^{({\rm Eve})}\geq \frac{1}{2}-\delta$ over $n$ modes. We use that the quantum relative entropy is additive for tensor product, i.e.   $D({\rho^{(0)}}^{\otimes n},{\rho^{(N_S)}}^{\otimes n})=nD({\rho^{(0)}},  {\rho^{(N_S)}})$ to reduce the calculation to the single channel-usage case. 

\begin{itemize}
    \item {\bf TMSV case:} Let us define the vector $\vec{r}=(\hat x_\leftarrow,\hat p_\leftarrow, \hat x_\rightarrow, \hat p_\rightarrow)^{T}$, where $\hat x_l=\frac{\hat a_l+\hat a_l^\dag}{\sqrt{2}}$ and $\hat p_l=\frac{\hat a_l-\hat a_l^\dag}{\sqrt{2}i}$ ($l\in\{\leftarrow,\rightarrow\}$). For a zero-mean Gaussian state $\rho$, the covariance matrix is defined as $\Sigma_{ij}=\text{Tr}\,(\{\hat r_j,\hat r_k\}\rho)$. The covariance matrix of the Gaussian state $\rho_{\tilde \varphi}^{(N_S)}$ is
\begin{equation}
\Sigma_{\tilde \varphi}^{(N_S)}=2\times
\begin{bmatrix}
A & 0 & -B \cos \tilde \varphi &  B \sin \tilde \varphi\\
0 & A& -B \sin \tilde \varphi & -B \cos \tilde \varphi \\
-B \cos \tilde \varphi & -B\sin \tilde \varphi & C & 0 \\
B \sin \tilde \varphi & -B \cos \tilde \varphi & 0 & C
\end{bmatrix}
\end{equation}
where $A= \frac{1}{2}+\eta N_B +(1-\eta)N_S$, $B=(1-\eta)\sqrt{\eta} (N_B-N_S)$, and $C=\frac{1}{2}+[(1-\eta)^2+\eta]N_B+(1-\eta)\eta N_S$. We have that \begin{align}
    D_{\rm Gauss}&=D(\rho^{(0)},\rho^{(N_S)})\\
    \quad&\leq \frac{1}{|\mathcal{A}|}\sum_{\tilde \varphi\in\mathcal{A}} D(\rho_{\tilde \varphi}^{(0)},\rho_{\tilde \varphi}^{(N_S)}) \\
    \quad &= D(\rho_{\tilde \varphi=0}^{(0)},\rho_{\tilde \varphi=0}^{(N_S)}),
\end{align}
where we have used the joint convexity property of the relative entropy, and that the $D(\hat U\rho \hat U^\dag,\hat U\sigma \hat U^\dag)=D(\rho,\sigma)$ for any unitary $\hat U$ and states $\rho$ and $\sigma$, together with Eq.~\eqref{beamspl}. For zero-mean Gaussian states, we have that 
\begin{align}\label{relGaussian}
    D(\rho_{\tilde \varphi}^{(0)},\rho_{\tilde \varphi}^{(N_S)}) &= \frac{1}{2}\left[\log\frac{\text{det}[\Sigma_{\tilde \varphi}^{(N_S)}+i\Omega]}{\text{det}[\Sigma_{\tilde \varphi}^{(0)}+i\Omega]}+\frac{1}{2}\text{Tr}\,[\Sigma_{\tilde \varphi}^{(0)}(H_{\tilde \varphi}^{(N_S)}-H_{\tilde \varphi}^{(0)})]\right],
\end{align}
where $\Omega=-i\mathbb{I}_2\otimes\sigma_y$ is the symplectic form and $H_{\tilde \varphi}^{(N_S)}=2~\text{arccoth}(i\Omega \Sigma_{\tilde \varphi}^{(N_S)})i\Omega$ is the Hamiltonian matrix, given that $\rho_{\tilde\varphi}^{(N_S)}$ is a faithful Gaussian state~\cite{S_PirandolaEntropy, S_Wilde17}. Eq.~\eqref{relGaussian} has been computed using Mathematica, finding 
\begin{align}
    D(\rho_{\tilde \varphi}^{(0)},\rho_{\tilde \varphi}^{(N_S)}) =
    &-(1 + 2 N_B \eta^2+2N_S(1-\eta^2)) \left[\text{arccoth}(1 + 2 N_B \eta^2) - 
    \text{arccoth}(1 + 2 N_S + 2 (N_B - N_S) \eta^2)\right] \nonumber \\
    \quad&+ 
 \frac{1}{2} \log\frac{N_B \eta^2 (1 + N_B \eta^2)}{N_S (1 + N_S) + (N_B - N_S) (1 + 2 N_S) \eta^2 + (N_B - N_S)^2 \eta^4},
\end{align}
which, as already mentioned, does not depend on $\tilde \varphi$. The expansion to the third order in $N_S$ gives 
\begin{align}
 D(\rho_{\tilde \varphi=0}^{(N_S)},\rho_{\tilde \varphi=0}^{(0)}) &= \frac{(1-\eta^2)^2}{2N_B\eta^2(1+N_B\eta^2)}N_S^2+ aN_S^3+o(N_S^3)\\
     \quad&\leq \frac{(1-\eta^2)^2}{2N_B\eta^2(1+N_B\eta^2)}N_S^2 \label{eqq}
\end{align}
where $a<0$ allows us to use the Taylor's remainder theorem.

\item {\bf General case:} We extend the result to signal states of the form $\rho_S=\sum_{j=0}^\infty N_S^j\sigma_j$, where 
\begin{align}
    \sigma_0&= |0\rangle\langle0|\\
    \sigma_1&= |1\rangle\langle1|-|0\rangle\langle0| \\
    \sigma_2&= c(|2\rangle\langle2|-2|1\rangle\langle1|+|0\rangle\langle0|),
\end{align}
with $0\leq c\leq 1$. This set of states include the Gaussian state case ($c=1$). In addition, for $N_S\ll1$ these states well approximate Gaussian states. 
  Let us consider the Taylor expansion for the logarithm of a matrix 
 \begin{align}
    \log(A+tB)&= \log(A)+t\int_0^\infty \frac{1}{A+z}B\frac{1}{A+z}dz-t^2\int_0^\infty \frac{1}{A+z}B\frac{1}{A+z}B\frac{1}{A+z}dz \nonumber \\ 
     \quad & \quad +t^3\int_0^\infty \frac{1}{A+z}B\frac{1}{A+z}B\frac{1}{A+z}B\frac{1}{A+z}dz + +o(t^3).
 \end{align} 
 If we set $A=\rho^{(0)}$, $B=\frac{\rho^{(N_S)}-\rho^{(0)}}{N_S}$ and $t=N_S$, we obtain
\begin{align}    D_c(\rho^{(0)},\rho^{(N_S)})&=-\text{Tr}\,\rho^{(0)}\ln \rho^{(N_S)}+\text{Tr}\,\rho^{(0)}\log \rho^{(0)} \\
   \quad& = -\text{Tr}\,(\rho^{(N_S)}-\rho^{(0)})\int_0^\infty \frac{1}{\rho^{(0)}+z}\rho^{(0)}\frac{1}{\rho^{(0)}+z}dz \nonumber\\
   \quad  &\quad +\text{Tr}\,\rho^{(0)}\int_0^\infty \frac{1}{\rho^{(0)}+z}(\rho^{(N_S)}-\rho^{(0)})\frac{1}{\rho^{(0)}+z}(\rho^{(N_S)}-\rho^{(0)})\frac{1}{\rho^{(0)}+z}dz \nonumber \\ 
     \quad & \quad -\text{Tr}\,\rho^{(0)}\int_0^\infty \frac{1}{\rho^{(0)}+z}(\rho^{(N_S)}-\rho^{(0)})\frac{1}{\rho^{(0)}+z}(\rho^{(N_S)}-\rho^{(0)})\frac{1}{\rho^{(0)}+z}(\rho^{(N_S)}-\rho^{(0)})\frac{1}{\rho^{(0)}+z}dz \nonumber\\
     \quad&\quad+o(N_S^3) \label{firstline} \\
     \quad &= N_S^2\text{Tr}\,\rho^{(0)}\int_0^\infty \frac{1}{\rho^{(0)}+z}\rho_1\frac{1}{\rho^{(0)}+z}\rho_1\frac{1}{\rho^{(0)}+z}dz \quad\quad (=b_1N_S^2) \nonumber \\
     \quad& \quad -N_S^3 \text{Tr}\,\rho^{(0)}\int_0^\infty \frac{1}{\rho^{(0)}+z}\rho_1\frac{1}{\rho^{(0)}+z}\rho_1\frac{1}{\rho^{(0)}+z}\rho_1\frac{1}{\rho^{(0)}+z}dz  \quad\quad (=b_2 N_S^3)\nonumber\\
     \quad &\quad +cN_S^3\text{Tr}\,\rho^{(0)}\int_0^\infty \left[\frac{1}{\rho^{(0)}+z}\rho_1\frac{1}{\rho^{(0)}+z}\rho_2\frac{1}{\rho^{(0)}+z} + \frac{1}{\rho^{(0)}+z}\rho_2\frac{1}{\rho^{(0)}+z}\rho_1\frac{1}{\rho^{(0)}+z} \right]dz\quad (=cb_3 N_S^3) \nonumber \\
     \quad&\quad+o(N_S^3),
  \end{align}  
  where $\rho_k=\frac{1}{|\mathcal{A}|}\sum_{\tilde \varphi\in\mathcal{A}}\mathcal{E}_{\tilde \varphi}[\sigma_k]$. Here, we have used that $\int_0^\infty \frac{s}{(s+z)^2}dz=1$ and that $\rho^{(N_S)}-\rho^{(0)}$ is traceless in order to conclude that the first line of Eq.~\eqref{firstline} is zero.  Next, we prove that $b_2\leq 0$. We have that $b_2=-\text{Tr}\,\int_0^\infty A_zB_zA_zB_zA_zdz$,  with $A_z=\frac{\sqrt{\rho^{(0)}}}{\rho^{(0)}+z}\rho_1\frac{\sqrt{\rho^{(0)}}}{\rho^{(0)}+z}$ and $B_z=\frac{\rho^{(0)}+z}{\rho^{(0)}}$, as $\rho^{(0)}$ is full-rank. Therefore, we have the bound
  \begin{align}
      b_2&= -\text{Tr}\,\int_0^\infty (A_zB_z)^2A_zdz \\
        \quad&\leq-\text{Tr}\, \int_0^\infty A_zdz =0,
  \end{align}
  where we have used that $(A_zB_z)^2\geq0$ and that $\rho_1$ is traceless. 
  
  We can now bound $D_c$ regardless of the sign of $b_3$, by using Eq.~\eqref{eqq}, as $c=1$ includes the Gaussian case. In fact, assume that $b_3\geq0$, then $D_c=D_{{\rm Gauss}}+(c-1)b_3 N_S^3+o(N_S^3)$. We can use Eq.~\eqref{eqq} and the Taylor's remainder theorem to conclude that $D_{0\leq c\leq1}\leq \frac{(1-\eta^2)^2N_S^2}{2N_B\eta^2(1+N_B\eta^2)}$. Assume that $b_3<0$, then we have that  $D_c\leq b_1 N_S^2$ by the Taylor's remainder theorem. In addition, we have that the bound $D_{\rm Gauss}=b_1N_S^2+O(N_S^3)\leq \frac{(1-\eta^2)^2N_S^2}{2N_B\eta^2(1+N_B\eta^2)}$ holds for any $N_S>0$, which implies that $b_1\leq \frac{(1-\eta^2)^2}{2N_B\eta^2(1+N_B\eta^2)}$.  

\end{itemize}

Therefore, if we choose $N_S\leq \frac{4\sqrt{N_B\eta^2(1+N_B\eta^2)}\delta}{(1-\eta^2)^2\sqrt{n}}$, then we have that $P^{{\rm (Eve)}}\geq\frac{1}{2}-\delta$ over $n$ channel usages.

{\bf [Converse]} We now prove that $D({\rho^{(N_S)}}^{\otimes n},{\rho^{(0)}}^{\otimes n})\leq 8\delta^2$ implies that $N_S\leq \frac{4\sqrt{\eta^2N_B(1+\eta^2N_B)}}{(1-\eta^2)^2}\frac{\delta}{\sqrt{n}}$. Let us consider Eve's modes $\hat w_\leftarrow^{(k)}$ and $\hat w_\rightarrow^{(k)}$, $k=1,\dots,n$. Let us these modes be the output of a black-box which has the knowledge of the individual realizations of $\tilde \varphi_k$. The black-box acts as a beamsplitter, generating the modes
\begin{align}
\hat w_1^{(k)}&=\sqrt{\frac{\eta}{1+\eta}}~\hat w_{\rightarrow}^{(k)}+\sqrt{\frac{1}{1+\eta}}~e^{-i\tilde\varphi_k}~\hat w_{\leftarrow}^{(k)} \\
\hat w_2^{(k)}&=-\sqrt{\frac{1}{1+\eta}}~\hat w_{\rightarrow}^{(k)}+ \sqrt{\frac{\eta}{1+\eta}}~e^{-i\tilde\varphi_k}~\hat w_{\leftarrow}^{(k)}.
\end{align}
We then trace-out the modes $\hat w_2^{(k)}$. Notice that 
\begin{equation}
\hat w_1^{(k)} = -\sqrt{1-\eta^2}~\hat a_S^{(k)} e^{-i\tilde \varphi_k} +\eta~\hat h^{(k)},
\end{equation}
where $\hat h^{(k)}=\sqrt{\frac{\eta}{1+\eta}}~\hat h_{\rightarrow}^{(k)}+\sqrt{\frac{1}{1+\eta}}~e^{-i\tilde\varphi_k}~\hat h_{\leftarrow}^{(k)}$ is in a thermal state with $N_B$ average number of photons, regardless of the value of $\tilde \varphi_k$. Since Eve does not have the knowledge of the phase $\tilde \varphi_k$ and $\hat a_S^{(k)}$ are i.i.d., the state of the modes $\hat w_1^{(k)}$ does not depend on $k$. Let us denote its density matrix as  $\sigma^{(N_S)}$. Here, $N_S=0$ in the ${\it off}$-setting and $N_S>0$ in the $on$-setting. We have that
\begin{equation}
D({\rho^{(0)}}^{\otimes n},{\rho^{(N_S)}}^{\otimes n})\geq D({\sigma^{(0)}}^{\otimes n},{\sigma^{(N_S)}}^{\otimes n}),
\end{equation}
which comes from the monotonicity of the quantum relative entropy under CPTP maps. This reduces the calculation to the one-way case. Following the proof of Theorem~1 in Ref.~\cite{Bullock20}, we find that
\begin{equation}
D({\sigma^{(0)}}^{\otimes n},{\sigma^{(N_S)}}^{\otimes n})\geq \frac{n(1-\eta^2)^2N_S^2}{2\eta^2N_B(1+\eta^2 N_B)}+o(N_S^2).
\end{equation}
Solving for $N_S$ ends the proof.
\end{proof}

Lemma~\ref{theocovert} implies the square-root law, provided that Alice and Bob share a codebook. Notice that the error probability of reading one wrong bit is
\begin{align}
    P_{\rm err}= 1-(1-p_{\rm err})^m\leq mp_{\rm err}
\end{align}
where $p_{\rm err}$ is the single-bit receiver error probability. This automatically means that a number $m=O(\sqrt{n}/\log n)$ bits are reliably transmissible. In fact, we have that $p_{\rm err}\leq \frac{1}{2}\exp{(-M\beta\eta^2 \frac{N_S}{1+N_B})}$ for $M$ large enough. Here, $\beta=4$ for the TMSV state and SC state transmitters with the optimal collective receiver, $\beta=2$ for the TMSV state and SC state transmitter with the optimal local receiver, and $\beta=1$ for the coherent state transmitter with a homodyne detector receiver. By setting $N_S=\frac{4\sqrt{N_B\eta^2(1+N_B\eta^2)}\delta}{(1-\eta^2)^2\sqrt{n}}$, with $n=mM$, we get
\begin{align}
    P_{\rm err} \leq \frac{m}{2}\exp \left(-4c_B\beta \delta \eta^4  \frac{\sqrt{n}}{m}\right),
\end{align}
where $c_B=\frac{N_B}{1+N_B}$. By setting $m=\frac{A \sqrt{n}}{\log\frac{A}{\epsilon}\log\sqrt{n}}$ with $A=4\delta \eta^4c_B\beta$, we have that $P_{\rm err}\leq \frac{\epsilon}{\log\frac{A}{\epsilon}\log\sqrt{n}}\leq\epsilon$, for small enough $\epsilon$. 

We can use the results in Refs.~\cite{S_Bash15, S_Bash13} for AWGN channels in order to get a better scaling for the decoding error probability. This is Theorem~\ref{SRL} of the main text.

\begin{proof}[Proof of Theorem~\ref{SRL}]
Let us define $\sigma_\beta^2=\frac{1+N_B}{2\beta \eta^2 M}$. In the optimal local protocol case, the induced AWGN channel for $M\gg1$ has a variance $\sigma_{\beta=2}^2$. For the coherent state case, the induced AWGN channel for any $M$ has a variance $\sigma_{\beta=1}^2$. In the optimal collective protocol case, we can induce a AWGN channel by dividing $M$ into $K\gg1$ slots of $M/K$ samples each, and apply the optimal collective protocol on each of the $K$ slots. This provides the same asymptotic performance for the receiver as long as $M/K\gg1$. The resulting AWGN channel has $\sigma_{\beta=4}^2$ variance. In all cases, we can follow the derication of Theorem~1.2 of Ref.~\cite{S_Bash13} to upper bound the error probability for transmitting $\bar m$ bits over $m$ modes averaged over a uniformly distributed codebook in a AWGN channel as 
\begin{align}
    P_{\rm err}&\leq 2^{\bar m-\frac{mN_S}{\pi\log (2)\sigma_{\beta}^2}+O(1)}\equiv P,
\end{align}
where $c_B=\frac{N_B}{1+N_B}$ and the bound holds for $MN_S=o(1)$. By setting $N_S=\frac{4\sqrt{N_B\eta^2(1+N_B\eta^2)}\delta}{(1-\eta^2)^2\sqrt{n}}$ and $M=\frac{n}{m}$, we get
\begin{align}
   \log_2P&\simeq \bar m-\frac{m}{\pi\log2}\frac{N_S}{\sigma_\beta^2}\\
   \quad &=\bar m-\frac{m}{\pi\log2}\frac{8\sqrt{N_B\eta^2(1+N_B\eta^2)}\delta}{(1-\eta^2)^2\sqrt{n}}\frac{\beta \eta^2 n/m}{1+N_B} \\
   \quad&\leq \bar m -\frac{8}{\pi\log2}c_B\beta\delta\eta^4\sqrt{n}.
\end{align}
By setting $\bar m=\frac{8}{\pi\log2}c_B\beta\delta\eta^4\sqrt{n}+\log_2\epsilon$, we get that $P\leq \epsilon$. As this calculation holds for a random codebook, it implies that there exists a specific codebook achieving this performance. Renaming $\beta_{det}=\beta$ and $\beta_{cov}=\frac{8}{\pi\log2}c_B\beta\eta^4$ concludes the proof.
\end{proof}


\section{Appendix~C: Signal-to-noise ratio of the Schr\"odinger's cat state receiver}\label{suppl:III}
Here, we quantify the performance of the circuit QED implementation of $\hat O_\tau$ in terms of the signal-to-noise ratio. \\

{\it Taylor expansion of $\hat O_\tau$}: We first expand $\hat U_\tau^\dag |e\rangle\langle e| \hat U_\tau$ with respect to the parameter $\tau$:
\begin{align}\label{expansion}
    \hat U_\tau^\dag |e\rangle\langle e| \hat U_\tau= |e\rangle\langle e|+\tau\left[\hat a_R^\dag\hat\sigma^- - \hat a_R \hat \sigma^+,|e\rangle \langle e|\right]+\frac{\tau^2}{2!}\left[\hat a_R^\dag\hat \sigma^- - \hat a_R \hat \sigma^+,[\hat a_R^\dag\hat \sigma^- - \hat a_R 
    \hat \sigma^+,|e\rangle\langle e|]\right]+o(\tau^2).
\end{align}
We have that
\begin{align}
    \left[\hat a_R^\dag\sigma^- - \hat a_R \hat \sigma^+,|e\rangle \langle e|\right]=\hat a_R^\dag \hat \sigma^-+\hat a_R\hat\sigma^+\equiv \hat E_1
\end{align}
and 
\begin{align}
    [\hat a_R^\dag\hat \sigma^- - \hat a_R \hat \sigma^+,
    \hat E_1]&=2[\hat a_R^\dag \hat \sigma^-,\hat a_R \hat \sigma^+]
    =-2|e\rangle \langle e| -2\hat \sigma_z \hat a_R^\dag \hat a_R
    \end{align}
    where $\sigma_z=|e\rangle\langle e|-|g\rangle \langle g|$ and  we have used that $[\hat \sigma^-,\hat \sigma^+]=-\hat \sigma_z$. By applying the unitary evolution $\hat S(r)\hat \sigma_x$ to each of the terms, and using the relations $\hat\sigma_x \hat\sigma^{\pm}\hat \sigma_x=\hat \sigma^\mp$, $\hat \sigma_x\hat \sigma_z\hat \sigma_x=-\hat \sigma_z$ and $\hat S(r)^\dag \hat a_R \hat S(r)=\hat a'_R$, we obtain
\begin{align}
    \hat O_\tau=|g\rangle \langle g| +\tau \hat O_{opt} + \tau^2 \hat A+o(\tau^2),
\end{align}
where $\hat A = -|g\rangle \langle g| +\hat \sigma_z \hat a'^\dag_R \hat a'_R $. This holds for $\tau^2\langle \hat A \rangle_{\rho_{\eta,\phi=0,\pi}}\ll1$. In the $N_S\ll1$, $N_B\gg1$ regime, this means roughly $\tau^2\ll 1/N_B$. \\

{\it SNR estimation:} We compute the SNR up to the second order in $\tau$, obtaining
\begin{align}
    Q_{\hat O_\tau}&\simeq \frac{\tau^2 (\langle \hat O_{opt}\rangle_{\rho_{\eta,\phi=0}}-\langle \hat O_{opt}\rangle_{\rho_{\eta,\phi=\pi}})^2}{(\lambda_+-\lambda_+^2)+\tau^2[\Delta \hat O_{opt}^2-2\langle|g\rangle\langle g|(\mathbb{I}+\hat a'^\dag_R\hat a'_R)\rangle_{\rho_{\eta,\phi=0}}-2\lambda_+^2\langle \hat A\rangle^2_{\rho_{\eta,\phi=0}}]} \\
    \quad&= Q_{\hat O_{opt}}\left[\frac{\tau^2}{a+\tau^2(1+b)}\right],
\end{align}
where $\Delta \hat O_{opt}^2=\langle\hat O_{opt}^2\rangle_{\rho_{\eta=0}}-\langle\hat O_{opt}\rangle_{\rho_{\eta=0}}^2$, $a=(\lambda_+-\lambda_+^2)/\Delta \hat O_{opt}^2$, and $b=-[2\lambda_+^2\langle \hat A\rangle^2_{\rho_{\eta,\phi=0}}+2\langle|g\rangle\langle g|(\mathbb{I}+\hat a'^\dag_R\hat a'_R)\rangle_{\rho_{\eta,\phi=0}}]/\Delta \hat O_{opt}^2$. 
Here, we have used the approximation $\langle\hat O_{opt}^2\rangle_{\rho_{\eta=0}}-\langle\hat O_{opt}\rangle_{\rho_{\eta=0}}^2\simeq \langle\hat O_{opt}^2\rangle_{\rho_{\eta,\phi}}-\langle\hat O_{opt}\rangle_{\rho_{\eta,\phi}}^2$ holding for any value of $\phi$ in  the $\eta\ll1$ limit. In the $N_S\ll1$ and $N_B\gg1$ regime, we have that $a\simeq N_S/N_B$ and $b\simeq -4N_S$. If $\tau^2\gg \frac{a}{1+b}$, then the SNR of $\hat O_\tau$ is close to the optimal one. In the $N_B\gg1$ regime, this happens whenever $\tau^2\gg N_S/N_B$. Therefore, any value $N_S/N_B\ll \tau \ll 1/N_B$ approximates $\hat O_\tau$ to the optimal observable. For instance, by setting $\tau^2=N_S/\sqrt{N_B}=\tau^{*2}$, we have that 
\begin{align}
    \frac{ Q_{\hat O_{\tau^*}}}{Q_{\hat O_{opt}}}&\simeq 1-\frac{a}{\tau^2}-b \\
    \quad&\simeq 1-\frac{1}{\sqrt{N_B}}+4N_S.
\end{align}

\section{Appendix~D: Qubit decoherence}  \label{suppl:IV}

In this section, we show how the decoherence affects the qubit measurements. We are assuming a Markovian noise described by the Lindblad operator $\mathcal{L}_D/\hbar=\frac{ \gamma}{2}\mathcal{D}[\hat \sigma_z]+\Gamma_{\uparrow}\mathcal{D}[
\hat\sigma^+]+\Gamma_{\downarrow}\mathcal{D}[\hat \sigma^-]$, where $\mathcal{D}[\hat L]\rho=(\hat L\rho \hat L^\dag -\frac{1}{2}\{\hat L^\dag \hat L,\rho\})$. In order to do so, we solve the equation $\partial_t \hat O = \mathcal{L}^\dag_D \hat O$ for different $\hat O$ defining a basis in the qubit Hilbert space, with $\mathcal{L}^\dag_D$ being the dual of $\mathcal{L}_D$. The linearity of the time-translation operator allows us to find the solution for general qubit observables. 
\begin{lemma}
We have that 
\begin{align}
e^{t\mathcal{L}_D^\dag}\hat \sigma^- &= e^{-\left[\gamma + \frac{\Gamma_{\uparrow}+\Gamma_{\downarrow}}{2}\right]t} \hat \sigma^- \label{dec1} \\ 
e^{t\mathcal{L}_D^\dag}\hat \sigma_z &= e^{-(\Gamma_{\uparrow}+\Gamma_{\downarrow})t}\hat \sigma_z +\left[1-e^{-(\Gamma_{\uparrow}+\Gamma_{\downarrow})t}\right]\frac{\Gamma_{\uparrow}-\Gamma_{\downarrow}}{\Gamma_{\uparrow}+\Gamma_{\downarrow}}\mathbb{I}. \label{dec2}
\end{align}
\end{lemma}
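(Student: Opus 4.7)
The plan is to pass to the Heisenberg picture, identify the action of the dual Lindbladian $\mathcal{L}_D^\dagger$ on the two observables as a closed (inhomogeneous) linear flow in a low-dimensional subspace of operators, and then integrate the resulting first-order ODEs. Concretely, I would first recall that the adjoint of the dissipator $\mathcal{D}[\hat L]\rho=\hat L\rho \hat L^\dagger-\tfrac{1}{2}\{\hat L^\dagger \hat L,\rho\}$ is $\mathcal{D}^\dagger[\hat L]\hat O=\hat L^\dagger\hat O\hat L-\tfrac{1}{2}\{\hat L^\dagger\hat L,\hat O\}$, so that
\begin{equation}
\mathcal{L}_D^\dagger\hat O/\hbar=\tfrac{\gamma}{2}\bigl(\hat\sigma_z\hat O\hat\sigma_z-\hat O\bigr)+\Gamma_\uparrow\bigl(\hat\sigma^-\hat O\hat\sigma^+-\tfrac{1}{2}\{\hat\sigma^-\hat\sigma^+,\hat O\}\bigr)+\Gamma_\downarrow\bigl(\hat\sigma^+\hat O\hat\sigma^--\tfrac{1}{2}\{\hat\sigma^+\hat\sigma^-,\hat O\}\bigr).
\end{equation}

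For Eq.~\eqref{dec1}, I would substitute $\hat O=\hat\sigma^-$ and use the Pauli identities $\hat\sigma_z\hat\sigma^-\hat\sigma_z=-\hat\sigma^-$, $(\hat\sigma^-)^2=0$, $\hat\sigma^-\hat\sigma^+=|g\rangle\langle g|$ and $\hat\sigma^+\hat\sigma^-=|e\rangle\langle e|$. Every term turns out to be proportional to $\hat\sigma^-$: the pure dephasing piece contributes $-2\hat\sigma^-$, while the relaxation and excitation pieces each contribute $-\tfrac{1}{2}\hat\sigma^-$. Summing them yields $\mathcal{L}_D^\dagger\hat\sigma^-=-[\gamma+(\Gamma_\uparrow+\Gamma_\downarrow)/2]\,\hat\sigma^-$, which is a one-dimensional linear ODE whose integration gives the exponential decay stated in Eq.~\eqref{dec1}.

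For Eq.~\eqref{dec2}, I would substitute $\hat O=\hat\sigma_z$ and notice that the dephasing term vanishes (since $\hat\sigma_z^2=\mathbb{I}$ commutes with $\hat\sigma_z$), while the two remaining terms produce combinations of $|g\rangle\langle g|$ and $|e\rangle\langle e|$. Using $|g\rangle\langle g|=\tfrac{1}{2}(\mathbb{I}-\hat\sigma_z)$ and $|e\rangle\langle e|=\tfrac{1}{2}(\mathbb{I}+\hat\sigma_z)$, I get $\mathcal{L}_D^\dagger\hat\sigma_z=(\Gamma_\uparrow-\Gamma_\downarrow)\mathbb{I}-(\Gamma_\uparrow+\Gamma_\downarrow)\hat\sigma_z$. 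This is an inhomogeneous scalar linear ODE in the coefficient of $\hat\sigma_z$, with a constant source term in $\mathbb{I}$, whose general solution is an exponential relaxation toward the steady state $\hat\sigma_z^{\mathrm{ss}}=\frac{\Gamma_\uparrow-\Gamma_\downarrow}{\Gamma_\uparrow+\Gamma_\downarrow}\mathbb{I}$. Imposing the initial condition $\hat\sigma_z(0)=\hat\sigma_z$ reproduces exactly Eq.~\eqref{dec2}.

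There is essentially no obstacle: the argument is a finite-dimensional calculation closed under the action of $\mathcal{L}_D^\dagger$ on $\{\hat\sigma^-,\hat\sigma_z,\mathbb{I}\}$. The only subtlety worth pointing out explicitly is the step of taking adjoints with respect to the Hilbert--Schmidt inner product, so that the evolution we solve is the Heisenberg one $\partial_t\hat O=\mathcal{L}_D^\dagger\hat O$ rather than the Schrödinger one. As a consistency check I would verify the signs by noting that the steady state has positive $\langle\hat\sigma_z\rangle$ when $\Gamma_\uparrow>\Gamma_\downarrow$, which physically corresponds to a population inversion driven by thermal excitation; this matches the identification $\Gamma_\uparrow/\Gamma_\downarrow=e^{-\beta\hbar\omega_q}$ used in the main text.
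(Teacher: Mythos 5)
Your proposal is correct and follows essentially the same route as the paper: compute the action of the dual dissipators on $\hat\sigma^-$ and $\hat\sigma_z$, observe that the set $\{\hat\sigma^-,\hat\sigma_z,\mathbb{I}\}$ is closed under $\mathcal{L}_D^\dag$, and integrate the resulting linear flow (the paper sums the exponential series $\mathcal{L}_D^{\dag k}\hat\sigma_z$ explicitly, which is equivalent to your inhomogeneous-ODE integration). The intermediate values you obtain, namely $\mathcal{D}[\hat\sigma_z]^\dag\hat\sigma^-=-2\hat\sigma^-$, $\mathcal{D}[\hat\sigma^\pm]^\dag\hat\sigma^-=-\tfrac{1}{2}\hat\sigma^-$ and $\mathcal{L}_D^\dag\hat\sigma_z=(\Gamma_\uparrow-\Gamma_\downarrow)\mathbb{I}-(\Gamma_\uparrow+\Gamma_\downarrow)\hat\sigma_z$, match the paper's exactly.
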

\begin{proof}
The relations can be derived by simply checking the action of the decoherence generators on the operator of interest. Notice that $\mathcal{L}_D^\dag = \frac{\gamma}{2}\mathcal{D}[\hat \sigma_z]^\dag+\Gamma_{\uparrow}\mathcal{D}[\hat \sigma^+]^\dag+\Gamma_{\downarrow}\mathcal{D}[\hat \sigma^-]^\dag$,
where $\mathcal{D}[L]^\dag \hat O= L^\dag \hat O L -\frac{1}{2} \{L^\dag L ,\hat O\}$. We have that 
\begin{align}
\mathcal{D}[\hat \sigma_z]^\dag \hat \sigma^- &= -2\hat\sigma^- \\ 
\mathcal{D}[\hat \sigma^+]^\dag \hat \sigma^- &= -\frac{\hat \sigma^-}{2} \\ 
\mathcal{D}[\hat \sigma^-]^\dag \hat \sigma^- &= -\frac{\hat \sigma^-}{2},
\end{align}
from which Eq.~\eqref{dec1} follows trivially using $e^{t \mathcal{L}_D^\dag}=\sum_{k=0}^\infty \frac{t^k \mathcal{L}_D^{\dag k}}{k!}$. We have that 
\begin{align}
\mathcal{D}[\hat \sigma_z]^\dag \hat\sigma_z &= 0 \\ 
\mathcal{D}[\hat\sigma^+]^\dag \hat\sigma_z &= 2|g\rangle\langle g| \\ 
\mathcal{D}[\hat\sigma^-]^\dag \hat\sigma_z &= -2|e\rangle\langle e|,
\end{align}
from which it follows that $\mathcal{L}_D^\dag \hat\sigma_z = -(\Gamma_{\uparrow}+\Gamma_{\downarrow})\hat\sigma_z + (\Gamma_{\uparrow}-\Gamma_{\downarrow})\mathbb{I}$. By using that $\mathcal{L}_D^\dag \mathbb{I}=0$, we infer that $\mathcal{L}_D^{\dag k}\hat \sigma_z= (-1)^k(\Gamma_{\uparrow}+\Gamma_{\downarrow})^k\hat\sigma_z + (-1)^{k-1}(\Gamma_{\uparrow}+\Gamma_{\downarrow})^{k-1}(\Gamma_{\uparrow}-\Gamma_{\downarrow})\mathbb{I}$, $k=1,2,\dots$ Eq.~\eqref{dec2} follows trivially.
\end{proof}
Therefore, we have that $T_1=(\Gamma_{\uparrow}+\Gamma_{\downarrow})^{-1}$ and $T_2=\left(\gamma+\frac{\Gamma_{\uparrow}+\Gamma_{\downarrow}}{2}\right)^{-1}$.
As $\hat O_{opt}=\sigma^-(\lambda_+\hat a+\lambda_-\hat a^\dag)+c.c.$ is linear in $\hat\sigma^-$ and $\sigma^+$, we have that $Q_{\hat O_{opt}}^{dec}/Q_{\hat O_{opt}}=e^{-2t/T_2}$, which is Eq.~\eqref{noisySNR} of the main text.

\end{document}